\documentclass[a4paper,UKenglish]{lipics-v2018}

\usepackage{microtype}%
\usepackage{tikz}
\usetikzlibrary{calc}
\usetikzlibrary{decorations.pathreplacing}

\usepackage[linesnumbered,ruled]{algorithm2e}

\usepackage{intcalc}

\usepackage{todonotes}

\bibliographystyle{plainurl}%

\title{Parallel and I/O-efficient 
    Randomisation of Massive Networks
  using Global Curveball Trades}

\titlerunning{Randomisation of Massive Networks using Global Curveball Trades}%

\author{Corrie Jacobien Carstens}{University of Amsterdam, Netherlands}{c.j.carstens@uva.nl}{}{}{}
\author{Michael Hamann}{Karlsruhe Institute of Technology, Germany}{michael.hamann@kit.edu}{}{}{}
\author{Ulrich Meyer}{Goethe University, Frankfurt, Germany}{umeyer@ae.cs.uni-frankfurt.de}{}{}{}
\author{Manuel Penschuck}{Goethe University, Frankfurt, Germany}{mpenschuck@ae.cs.uni-frankfurt.de}{}{}{}
\author{Hung Tran}{Goethe University, Frankfurt, Germany}{htran@ae.cs.uni-frankfurt.de}{}{}{}
\author{Dorothea Wagner}{Karlsruhe Institute of Technology, Germany}{dorothea.wagner@kit.edu}{}{}

\authorrunning{C. J. Carstens, M. Hamann, U. Meyer, M. Penschuck, H. Tran and D. Wagner}%

\Copyright{Corrie Jacobien Carstens, Michael Hamann, Ulrich Meyer, Manuel Penschuck, Hung Tran and Dorothea Wagner}%

\subjclass{%
  Mathematics of computing
  $\rightarrow$ Random graphs}

\keywords{Graph randomisation, Curveball, I/O-efficiency, Parallelism}%

\category{}%

\relatedversion{}%

\supplement{%
    Stable versions of \imcb{} and \emgcb{} are released as part of NetworKit (\url{http://network-analysis.info}). 
}

\funding{This work was partially supported by Deutsche Forschungsgemeinschaft (DFG) under grants ME~2088/3-2, ME~2088/4-2, and WA~654/22-2.}

\ArticleNo{11} %
\nolinenumbers %
\hideLIPIcs  %

\theoremstyle{bold}
\newtheorem{mydef}[theorem]{Definition}
\newtheorem{mylem}[theorem]{Lemma}
\newtheorem{myexp}[theorem]{Example}
\newtheorem{myprop}[theorem]{Proposition}
\newtheorem{myrem}[theorem]{Remark}
\newtheorem{mycor}[theorem]{Corollary}
\newtheorem{mythm}[theorem]{Theorem}

\DeclareMathOperator{\avg}{avg}
\DeclareMathOperator{\sort}{sort}
\DeclareMathOperator{\scan}{scan}
\DeclareMathOperator{\rank}{rank}

\newcommand{\newalgoname}[3]{%
    \expandafter\gdef\csname #1\endcsname{\hyperref[#2]{\textsl{#3}}}
    \expandafter\gdef\csname #1Head\endcsname{\texorpdfstring{\textsl{#3}}{#3}}
}    

\newalgoname{emhh}{sec:intro}{EM-HH}
\newalgoname{emes}{sec:intro}{EM-ES}
\newalgoname{emlfr}{sec:intro}{EM-LFR}
\newalgoname{vles}{sec:experiments}{VL-ES}
\newalgoname{imcb}{subsec:algo-imcb}{IM-CB}
\newalgoname{emcb}{subsec:algo-emcb}{EM-CB}
\newalgoname{emgcb}{subsec:algo-emgcb}{EM-GCB}
\newalgoname{empgcb}{subsec:algo-empgcb}{EM-PGCB}
\newalgoname{tfp}{subsec:tfp}{TFP}
\newalgoname{esmc}{subsection:es}{ESMC}

\newcommand{\sequence}[3]{\ensuremath{[\,#1\,]_{#2}^{#3}}}

\newcommand{\A}{\mathcal A}
\newcommand{\Nei}[1]{\ensuremath{\A_{#1}}}
\newcommand{\AG}{\mathscr{A}_G}
\newcommand{\Suc}{\ensuremath{\mathscr{S}}}
\renewcommand{\O}{\mathcal O} %
\newcommand{\pld}[3]{\ensuremath{\textsc{Pld}\,([#1, #2), #3)}}
\newcommand{\defrel}{\ensuremath{:=}}

\newcommand{\hide}[1]{}

\newenvironment{myalgorithm}[1][htb]
{
  \let\oldnl\nl%
	\newcommand{\nonl}{\renewcommand{\nl}{\let\nl\oldnl}}%
	
  \SetCommentSty{myalgorithmCommSty}
  
	\DontPrintSemicolon
	\SetAlgoLined\SetAlgoNoEnd
	\SetArgSty{normalfont}
	\SetKwInOut{Input}{Input}
	\SetKwInOut{Output}{Output}
	\SetKw{Continue}{continue}
	\SetKwFor{ParForEach}{par foreach}{}{}
	\SetKw{Sequentially}{sequentially}
	\begin{algorithm}[#1]\small%
	}{\end{algorithm}}
 \makeatletter%
\begin{document}

\maketitle

\begin{abstract}
  Graph randomisation is a crucial task in the analysis and synthesis of networks.
  It is typically implemented as an \emph{edge switching} process (\emph{ESMC}) repeatedly swapping the nodes of random edge pairs while maintaining the degrees involved~\cite{Mihail2003}.
  Curveball is a novel approach that instead considers the whole neighbourhoods of randomly drawn node pairs.
  Its Markov chain converges to a uniform distribution, and experiments suggest that it requires less steps than the established \emph{ESMC} \cite{Carstens2016}.
  Since trades however are more expensive, we study Curveball's practical runtime by introducing the first efficient Curveball algorithms:
  the I/O-efficient \emph{EM-CB} for simple undirected graphs and its internal memory pendant \emph{IM-CB}.
  Further, we investigate \emph{global trades}~\cite{Carstens2016} processing every node in a graph during a single super step, and show that undirected global trades converge to a uniform distribution and perform superior in practice.
  We then discuss \emph{EM-GCB} and \emph{EM-PGCB} for global trades and give experimental evidence that \emph{EM-PGCB} achieves the quality of the state-of-the-art \emph{ESMC} algorithm \emph{EM-ES} \cite{Hamann2017a} nearly one order of magnitude faster. 
\end{abstract}

\clearpage

\section{Introduction}\label{sec:intro}
In the analysis of complex networks, such as social networks, the underlying graphs are commonly compared to random graph models to understand their structure~\cite{Itzkovitz2003,Newman2001,Strogatz2001}.
While simple models like Erd\H{o}s-R\'enyi graphs~\cite{er-rg-59} are easy to generate and analyse, they are too different from commonly observed powerlaw degree sequences~\cite{Newman2001,n-tsfcn-03,Strogatz2001}.
Thus, random graphs with the same degree sequence as the given graph are frequently used~\cite{Cobb2003,Itzkovitz2003,Schlauch2015Motif}.
In practice, many of these graphs are simple graphs, i.e. graphs without self-loops and multiple edges.
In order to obtain reliable results in these cases, the graphs sampled need to be simple since non-simple models can lead to significantly different results~\cite{Schlauch2015, Schlauch2015Motif}.
The randomisation of a given graph is commonly implemented as an edge switching Markov chain \esmc{}~\cite{Cobb2003,Milo2003}.

Nowadays, massive graphs that cannot be processed in the RAM of a single computer, require new analysis algorithms to handle these huge datasets. 
In turn, large benchmark graphs are required to evaluate the algorithms' scalability --- in terms of speed and quality.
LFR is a standard benchmark for evaluating clustering algorithms which repeatedly generates highly biased graphs that are then randomised~\cite{Lancichinetti2009,Lancichinetti2008}.
\cite{Hamann2017a} presents the external memory LFR generator \emlfr{} and its I/O-efficient edge switching \emes{}.
Although \emes{} is faster than previous results even for graphs fitting into RAM, it dominates \emlfr{}'s running time.
Alternative sampling via the Configuration Model~\cite{Molloy1995} was studied to reduce the initial bias and the number of \esmc{} steps necessary~\cite{Hamann2017b}.
Still, graph randomisation remains a major bottleneck during the generation of these huge graphs.

The Curveball algorithm has been originally proposed for randomising binary matrices while preserving row and column sums~\cite{Strona2014, Verhelst2008}
and has been adopted for graphs~\cite{CarstensPhd, Carstens2016}:
instead of switching a pair of edges as in \esmc{}, Curveball trades the neighbours of two nodes in each step.
Carstens et~al. further propose the concept of a \emph{global trade}, a super step composed of single trades targetting every node\footnote{For an odd number~$n$ of nodes, a single node is left out} in a graph once~\cite{Carstens2016}.
The authors show that global trades in bipartite or directed graphs converge to a uniform distribution, and give experimental evidence that global trades require fewer Markov-chain steps than single trades.
However, while fewer steps are needed, the trades themselves are computationally more expensive.
Since we are not aware of previous efficient Curveball algorithms and implementations, we investigate this trade-off here.

\textbf{Our contributions}.
We present the first efficient algorithms for Curveball: the (sequential) internal memory and external memory algorithms \imcb{}\footnote{%
    We prefix internal memory algorithms with \texttt{IM} and I/O-efficient algorithms with \texttt{EM}.
    The suffices \texttt{CB}, \texttt{GCB}, and \texttt{PGCB} denote Curveball, CB. with global trades, and parallel CB. with global trades respectively.}
and \emcb{} for the Simple Undirected Curveball algorithm (see \autoref{sec:algo}).
Experiments in \autoref{sec:experiments}, indicate that they are faster than the established edge switching approaches in practice.

In \autoref{section:randschemes}, we show that random global trades lead to uniform samples of simple, undirected graphs and demonstrate experimentally in \autoref{sec:experiments} that they converge even faster than the corresponding number of uniform single trades.
Exploiting structural properties of global trades, we simplify \emcb{} yielding \emgcb{} and the parallel I/O-efficient \empgcb{} which achieves \emes{}'s quality nearly one order of magnitude faster in practice (see \autoref{sec:experiments}).

\section{Preliminaries and Notation}\label{sec:notation}
We define the short-hand $[k] \defrel \{1, \ldots, k\}$ for $k \in \mathbb N_{>0}$, and write $\sequence{x_i}{i=a}{b}$ for an ordered sequence $[x_a, x_{a+1}, \ldots, x_b]$.

\textbf{Graphs and degree sequences.}
A graph $G = (V, E)$ has $n=|V|$ sequentially numbered nodes $V = \{v_1, \ldots, v_n\}$ and $m=|E|$ edges.
Unless stated differently, graphs are assumed to be undirected and unweighted.
To obtain a unique representation of an \emph{undirected} edge $\{u, v\} \in E$, we use \emph{ordered} edges $[u, v] \in E$ implying $u \le v$; in contrast to a directed edge, the ordering is used algorithmically but does not carry any meaning.
A graph is called \emph{simple} if it contains neither multi-edges nor self-loops, i.e. $E \subseteq \{\, \{u,v\}\, |\, u,v \in V \text{ with } u \ne v\, \}$.
For node $u \in V$ define the \emph{neighbourhood} $\Nei{u} := \{ v : \{u,v\} \in E \}$ and \emph{degree} $\deg(u) \defrel |\Nei{u}|$.
Let $d_{\max} \defrel \max_v\{\deg(v)\}$ be the maximal degree of a graph.
A vector $\mathcal D = \sequence{d_i}{i=1}{n}$ is a degree sequence of graph $G$ iff $\forall v_i \in V\colon \deg(v_i) = d_i$.

\textbf{Randomisation and Distributions.} $\pld ab\gamma$ refers to an integer \underline{P}ower\underline{l}aw \underline{D}istribution with exponent $-\gamma \in \mathbb R$ for $\gamma \ge 1$ and values from the interval $[a, b)$; 
let $X$ be an integer random variable drawn from $\pld ab\gamma$ then $\mathbb P[X{=}k] \propto k^{-\gamma}$ (proportional to) if $a \le k < b$ and $\mathbb P[X{=}k] = 0$ otherwise.
A statement depending on some number $x > 0$ is said to hold \emph{with high probability} if it is satisfied with probability at least $1 - 1/x^c$ for some constant $c \ge 1$.
Let $S$ be a finite set, $x \in S$ and let $\sigma$ be permutation on $S$, we define $\rank_\sigma(x)$ as the number of elements positioned in front of $x$ by $\sigma$.

\subsection{External-Memory Model}\label{ssec:emm}
In contrast to classic models of computation, such as the unit-cost random-access machine, modern computers contain deep memory hierarchies ranging from fast registers, over caches and main memory to solid state 
drives (SSDs) and hard disks. 
Algorithms unaware of these properties may face performance penalties of several orders of magnitude.

We use the commonly accepted external memory (EM) model by Aggarwal and Vitter~\cite{Aggarwal1988} to reason about the influence of data locality in memory hierarchies. 
It features two memory types, namely fast internal memory (IM or RAM) holding up to $M$ data items, and a slow disk of unbounded size.
The input and output of an algorithm are stored in EM while computation is only possible on values in IM.
An algorithm's performance is measured in the number of I/Os required.
Each I/O transfers a block of $B = \Omega(\sqrt{M})$ consecutive items between memory levels.
Reading or writing $n$ contiguous items is referred to as \emph{scanning} and requires $\scan(n) \defrel \Theta(n/B)$~I/Os.
Sorting $n$ consecutive items triggers $\sort(n) \defrel \Theta((n/B) \cdot \log_{M/B}(n/B))$~I/Os.
For all realistic values of $n$, $B$ and $M$, $\scan(n)<\sort(n)\ll n$.
Sorting complexity constitutes a lower bound for most intuitively non-trivial EM tasks~\cite{Meyer2003}.
EM queues use amortised $\O(1/B)$ I/Os per operation and require $\O(B)$ main memory~\cite{pagh2003basic}.
An external priority queue (PQ) requires $\O(\sort(n))$ I/Os to push and pop $n$ items~\cite{Arge1995}.

\subsection{\tfpHead: Time Forward Processing}\label{subsec:tfp}
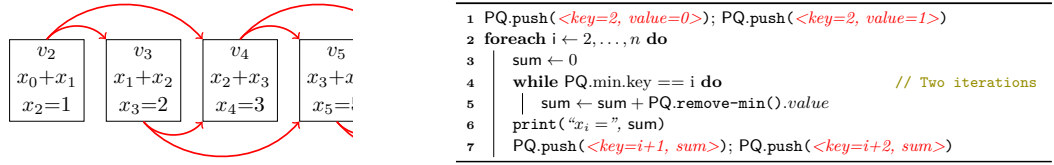
\begin{figure*}
  \vspace{-1em}  
  \begin{minipage}{0.35\textwidth}
    \scalebox{0.8}{%
\begin{tikzpicture}[
  event/.style = {draw, xshift=1em, anchor=west, align=center},
  depn/.style = {draw, ->, thick, bend left=50, red},
  deps/.style = {draw, ->, thick, bend right=50, red},
  depo/.style = {opacity=0.3, dashed}
  ]
  
  \begin{scope}
    \clip(0,-1.5) rectangle (6, 1.5);
    
    \node[event] at (0,0)     (x2) {$v_2$ \\ $x_0 {+} x_1$ \\ $x_2{=}1$};
    \node[event] at (x2.east) (x3) {$v_3$ \\ $x_1 {+} x_2$ \\ $x_3{=}2$};
    \node[event] at (x3.east) (x4) {$v_4$ \\ $x_2 {+} x_3$ \\ $x_4{=}3$};
    \node[event] at (x4.east) (x5) {$v_5$ \\ $x_3 {+} x_4$ \\ $x_5{=}5$};
    \node[event] at (x5.east) (x6) {$v_6$ \\ $x_4 {+} x_5$ \\ $x_6{=}8$};
    \node[event] at (x6.east) (x7) {$v_7$ \\ $x_5 {+} x_6$ \\ $x_7{=}13$};
    
    \path[depn] (x2.north) to (x3.north west);
    \path[depn] (x2.north) to (x4.north west);
    
    \path[deps] (x3.south) to (x4.south west);
    \path[deps] (x3.south) to (x5.south west);
    
    \path[depn] (x4.north) to (x5.north west);
    \path[depn] (x4.north) to (x6.north west);
    
    \path[deps] (x5.south) to (x6.south west);
    \path[deps] (x5.south) to (x7.south west);
  
  \end{scope}
\end{tikzpicture} %
}
  \end{minipage}
  \hfill
  \scalebox{0.7}{\begin{minipage}{0.8\textwidth}
      \footnotesize
\setlength{\interspacetitleruled}{-.4pt}%
\begin{myalgorithm}[H]
  \SetKwData{pq}{PQ}
  \SetKwData{i}{i}
  \SetKwData{sum}{sum}
  
  \SetKwFunction{print}{print}
  \SetKwFunction{ppush}{push}
  \SetKwFunction{ppop}{remove-min}
  
  \pq.\ppush{$\text{\textcolor{red}{<key=2, value=0>}}$}; \pq.\ppush{$\text{\textcolor{red}{<key=2, value=1>}}$}\;
  \ForEach{$\i \gets 2, \ldots, n$}{
    $\sum \gets 0$\;
    \While(\tcp*[f]{Two iterations}){\pq.min.key == i}{
      $\sum \gets \sum + \pq.\ppop{}.value$\;
    }
    
    \print{``$x_i = $'', \sum}\;
    \pq.\ppush{$\text{\textcolor{red}{<key=i+1, sum>}}$}; \pq.\ppush{$\text{\textcolor{red}{<key=i+2, sum>}}$}
} 
\end{myalgorithm}   \end{minipage}}
  
  \vspace{-0.5em}
  \caption{
    \textbf{Left:} Dependency graph of the Fibonacci sequence (ignoring base case).
    \textbf{Right:} Time Forward Processing to compute sequence.
  }
  \label{fig:tfp}
\end{figure*}

Time Forward Processing (\tfp) is a generic technique to manage data dependencies of external memory algorithms \cite{Maheshwari2003}.
Consider an algorithm computing values $x_1, \ldots, x_n$ in which the calculation of $x_i$ requires previously computed values.
One typically models these dependencies using a directed acyclic graph $G{=}(V,E)$.
Every node $v_i\in V$ corresponds to the computation of $x_i$ and an edge $(v_i, v_j)\in E$ indicates that the value $x_i$ is necessary to compute $x_j$.
For instance consider the Fibonacci sequence $x_0=0,\ x_1=1,\ x_i = x_{i-1} + x_{i-2}\ \forall i \ge 2$ in which each node $v_i$ with $i \ge 2$ depends on exactly its two predecessors (see Fig.~\ref{fig:tfp}).
Here, a linear scan for increasing $i$ suffices to solve the dependencies.

In general, an algorithm needs to traverse $G$ according to some topological order $\prec_T$ of nodes $V$ and also has to ensure that each $v_j$ can access values from all $v_i$ with $(v_i, v_j) \in E$.
The \tfp{} technique achieves this as follows:
as soon as $x_i$ has been calculated, messages of the form $\langle v_j, x_i \rangle$ are sent to all successors $(v_i, v_j) \in E$.
These messages are kept in a minimum priority queue sorting the items by their recipients according to $\prec_T$.
By construction, the algorithm only starts the computation $v_i$ once all predecessors $v_j \prec_T v_i$ are completed.
Since these predecessors already removed their messages from the PQ, items addressed to $v_i$ (if any) are currently the smallest elements in the data structure and can be dequeued.
Using a suited EM PQ~\cite{Arge1995}, TFP incurs $\O(\sort(k))$ I/Os, where $k$ is the number of messages sent.

\section{Randomisation schemes}
\label{section:randschemes}
Here, we summarise the randomisation schemes ESMC~\cite{Milo2003} and Curveball for simple undirected graphs~\cite{CarstensPhd}, and then discuss the notion of global trades. 
Since these algorithms iteratively modify random parts of a graph, they can be analysed as finite Markov chains.
It is well known that any finite, irreducible, aperiodic, and symmetric Markov chain converges to the uniform distribution on its state space (e.g. \cite{Levin2009}). 
Its \emph{mixing time} indicates the number of steps necessary to reach the stationary distribution.

\subsection{Edge-Switching}
\label{subsection:es}

\esmc{} is a state-of-the-art randomisation method with a wide range of applications, e.g.\ the generation of graphs \cite{Hamann2017a,Lancichinetti2008}, or the randomisation of biological datasets~\cite{Iorio2016}.
In each step, \esmc{} chooses two edges $e_1 = [u_1, v_1], e_2 = [u_2, v_2]$ and a direction $d \in \{0,1\}$ uniformly at random and rewires them into $\{u_1, u_2\}, \{v_1, v_2\}$ if $d{=}0$ and $\{u_1, v_2\}, \{v_1, u_2\}$ otherwise.
If a step yields a non-simple graph, it is skipped.
\esmc{}'s Markov chain is irreducible~\cite{Eggleton1981}, aperiodic and symmetric~\cite{Mihail2003} and hence converges to the uniform distribution on the space of simple graphs with fixed degree sequence. 
While analytic bounds on the mixing time~\cite{Greenhill2011, Greenhill2015} are impractical, usually a number of steps linear in the number of edges is used in practice~\cite{Ray2012}.

\subsection{Simple Undirected Curveball algorithm}
\label{subsection:cb}

Curveball is a novel randomisation method. 
In each step, two nodes trade their neighbourhoods, possibly yielding faster mixing times~\cite{CarstensPhd, Strona2014, Verhelst2008}.

\begin{mydef}[Simple Undirected Trade]
	\label{def:trade}
	Let $G = (V, E)$ be a simple graph, $A$ be its adjacency list representation, and $A_u$ be the set of neighbours of node $u$.
	A trade $t = (i, j, \sigma)$ from $A$ to adjacency list $B$ is defined by two nodes~$i$ and~$j$, and a permutation $\sigma\colon D_{ij} \to D_{ij}$ where $A_{i-j} \defrel A_i \setminus (A_j \cup \{j\})$ and $D_{ij} := A_{i-j} \cup A_{j-i}$.
	As shown in Fig.~\ref{fig:trade}, performing $t$ on $G$ results in 
        $B_i = (A_i {\setminus} A_{i-j}) \cup \{ x \mid x \in D_{ij}, \rank_\sigma(x) \le |A_{i-j}| \}$ and 
        $B_j = (A_j {\setminus} A_{j-i}) \cup \{ x \mid x \in D_{ij}, \rank_\sigma(x) >  |A_{i-j}| \}$.
	Since edges are undirected, symmetry has to be preserved: 
	for all $u \in A_i {\setminus} B_i$ the label $j$ in adjacency list $B_u$ is changed to $i$ and analogously for $A_j \setminus B_j$.
\end{mydef}
\begin{figure}[t]
  \vspace{-1em}
  \begin{center}
    \scalebox{0.9}{%
\begin{tikzpicture}[
    graphnode/.style={draw, circle, minimum width=1.5em, inner sep=0.2em, align=center, fill=white},
    tradenode/.style={graphnode, fill=black!10},
    tradable/.style={thick},
    traded/.style={tradable, red},
    label/.style={align=left, inner sep=0}
  ]
	\node[tradenode] (ai) {$i$};
	\node[tradenode, right of=ai] (aj) {$j$};
	\node[graphnode, above left of=ai] (a1) {$1$};
	\node[graphnode, above of=ai] (a2) {$2$};
	\node[graphnode, above of=aj] (a3) {$3$};
	\node[graphnode, above right of=aj] (a4) {$4$};
	\node[graphnode, right of=aj] (a5) {$5$};

  \coordinate (abetweenij) at ($(ai)!0.5!(aj)$);
  \node[graphnode, opacity=0, below right of=ai] (abelowi) {};
  \node[graphnode] at (abetweenij |- abelowi) (a6) {$6$};

	\node[tradenode] (bi) at (11.5,0) {$i$};
  \node[tradenode, right of=bi] (bj) {$j$};
  \node[graphnode, above left of=bi] (b1) {$1$};
  \node[graphnode, above of=bi] (b2) {$2$};
  \node[graphnode, left of=bi, opacity=0] (m2) {$2$};
  \node[graphnode, above of=bj] (b3) {$3$};
  \node[graphnode, above right of=bj] (b4) {$4$};
  \node[graphnode, right of=bj] (b5) {$5$};
  
  \coordinate (bbetweenij) at ($(bi)!0.5!(bj)$);
  \node[graphnode, opacity=0, below right of=bi] (bbelowi) {};
  \node[graphnode] at (bbetweenij |- bbelowi) (b6) {$6$};

	\draw (ai) -- (aj);
	\draw (ai) -- (a6);
	\draw (aj) -- (a6);
	\draw[tradable] (aj) -- (a5);
	
	\draw[tradable] (ai) -- (a1);
	\draw[tradable] (ai) -- (a2);
	\draw[tradable] (aj) -- (a3);
	\draw[tradable] (aj) -- (a4);

  \draw (bi) -- (bj);
  \draw (bi) -- (b6);
  \draw (bj) -- (b6);
  \draw[tradable] (bj) -- (b5);
  
  \draw[traded] (bi) -- (b3);
  \draw[traded] (bi) -- (b4);
  \draw[traded] (bj) -- (b1);
  \draw[traded] (bj) -- (b2);

	\draw[very thick,->] (a5) ++ (0.5,0) -- +(0.5,0);
  \draw[very thick,<-] (m2) ++ (-0.5,0) -- +(-0.5,0);

	\node[below of=a6, label] (Ai) {$A_i = \{1,2,6,j\}$ \\ $A_j = \{3,4,5,6,i\}$};
  \node[below of=b6, label] (Bi) {$B_i = \{3,4,6,j\}$ \\ $B_j = \{1,2,5,6,i\}$};
  \node[label, anchor=center] (sigA)  at ($(Ai)!0.5!(Bi)$) {$B_{i-j} = \{3,4\}$ \\ $B_{j-i} = \{1,2,5\}$};

  \node[text depth=0pt, text height=0.5em] (sig) at ($(a5)!0.5!(m2)$) {
    $\sigma(\underbrace{\fbox{1,2}}_{A_{i-j}}, \underbrace{\fbox{3,4,5}}_{A_{j-i}}) 
    \mapsto (\underbrace{\fbox{4,3}}_{B_{i-j}},\underbrace{\fbox{5,1,2}}_{B_{j-i}})$};
\end{tikzpicture} %
}
  \end{center}

  \vspace{-1em}
  \caption{
    The trade $(i,j,\sigma)$ between nodes~$i$ and~$j$ only considers edges to the disjoint neighbours $\{1, \ldots, 5\}$.
    For the reassigned disjoint neighbours we use the short-hand $B_{i-j} \defrel \{ x \mid x \in D_{ij}, \rank_\sigma(x) \le |A_{i-j}| \}$ and $B_{j-i} \defrel \{ x \mid x \in D_{ij}, \rank_\sigma(x) > |A_{i-j}| \}$.
    The triangle $(i, j, 6)$ is omitted as trading any of its edges would either introduce parallel edges, self loops, or result in no change at all.
    Then, the given $\sigma$ exchanges four edges.}
	\label{fig:trade}
\end{figure}
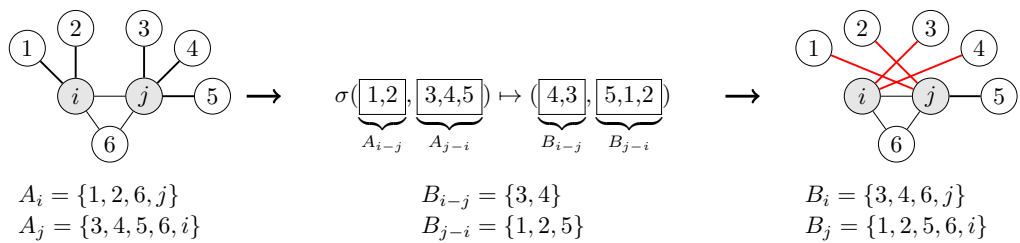

Simple Undirected Curveball randomises a graph by repeatedly selecting a node pair $\{i,j\}$ and permutation $\sigma$ on the disjoint neighbours uniformly at random.
Its Markov chain is irreducible, aperiodic and symmetric and hence converges to the uniform distribution \cite{Carstens2016}. 

\subsection{Undirected Global Trades}\label{subsec:gcb}
Trade sequences typically consist of pairs in which each constituent is drawn uniformly at random.
While it is a well-known fact\footnote{%
  For instance studied as the coupon collector problem.}
that $\Theta(n \log n)$ trades are required in expectation until each node is included at least once, there is no apparent reason why this should be beneficial;
in fact, experiments in \autoref{sec:experiments} suggest the contrary.

Carstens et al. propose the notion of \emph{global trades} for directed or bipartite graphs as a 2-partition of all nodes implicitly forming $n/2$ node pairs to be traded in a single step~\cite{Carstens2016}.
This concept fails for undirected graphs where in general the two directions $(u,v)$ and $(v, u)$ of an edge $\{u, v\}$ cannot be processed independently in a single step.
We hence extend global trades to undirected graphs by interpreting them as a sequence of $n / 2$ simple trades which together target each node exactly once (we assume $n$ to be even; if this is not the case we add an isolated node\footnote{%
  This is equivalent to randomly excluding a single node from a global trade}).
Dependencies are then resolved by the order of this sequence.
 
\begin{mydef}[Undirected Global Trade]
	\label{def:globaltrade}
  Let $G = (V, E)$ be a simple undirected graph and $\pi\colon V \to V$ be a permutation on the set of nodes.
  A \emph{global trade} $T = (t_1, \ldots, t_\ell)$ for $\ell = \lfloor n/2\rfloor$ is a sequence of trades $t_i = \{\pi(v_{2i {-} 1}), \pi(v_{2i}), \sigma_i\}$. 
  By applying $T$ to $G$ we mean that the trades $t_1, \ldots, t_\ell$ are applied successively starting with $G$. 
\end{mydef}

Theorem~\ref{thm:globaltrades} allows us to use global trades as a substitute for a sequence of single trades, as global trades preserve the stationary distribution of Curveball's Markov chain.
The proof  extends \cite{Carstens2016}, which shows convergence of global trades in bipartite or directed graphs, to undirected graphs and uses similar techniques.

\begin{mythm}
	\label{thm:globaltrades}
	Let $G = (V, E)$ be an arbitrary simple undirected graph, and let $\Omega_G$ be the set of all simple directed graphs that have the same degree sequence as $G$. The Curveball algorithm with global trades and started at $G$ converges to the uniform distribution on $\Omega_G$.
\end{mythm}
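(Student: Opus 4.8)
The plan is to show that the Markov chain on $\Omega_G$ induced by uniformly random global trades is finite, irreducible, aperiodic and symmetric; the statement then follows from the standard fact recalled in \autoref{section:randschemes} (see \cite{Levin2009}) that such chains converge to the uniform distribution on their state space. Finiteness, and the fact that the chain indeed lives on $\Omega_G$, are immediate since every trade preserves simplicity and the degree sequence. Write $Q_p$ for the transition matrix on $\Omega_G$ of a single trade on a fixed node pair $p=\{i,j\}$, i.e.\ the operator that draws $\sigma$ uniformly among the $|A_{i-j}\cup A_{j-i}|!$ permutations of the disjoint neighbours of $i$ and $j$ and applies the trade $(i,j,\sigma)$ of \autoref{def:trade}. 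For a permutation $\pi$ of $V$ let $p^\pi_1,\dots,p^\pi_\ell$ (with $\ell=\lfloor n/2\rfloor$) be the ordered sequence of pairs it defines; then the global-trade chain has transition matrix $P=\frac{1}{n!}\sum_{\pi}Q_{p^\pi_1}\cdots Q_{p^\pi_\ell}$.

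Two elementary properties of single trades carry the argument. First, each $Q_p$ has a strictly positive diagonal, since choosing $\sigma=\mathrm{id}$ leaves the graph unchanged, so $Q_p[A,A]>0$. Second, each $Q_p$ is symmetric: the set $D_A := A_{i-j}\cup A_{j-i}$ of disjoint neighbours of $i$ and $j$ is invariant under a trade on $p$ (the trade only redistributes $D_A$ among $i$ and $j$ while keeping $\deg(i)$ and $\deg(j)$ fixed), so if $B$ results from $A$ then $D_B=D_A$ and, conversely, $A$ results from $B$; in either direction the number of admissible permutations equals $|A_{i-j}|!\,|A_{j-i}|!=|B_{i-j}|!\,|B_{j-i}|!$, whence $Q_p[A,B]=Q_p[B,A]$. (These are the observations underlying the single-trade analysis of \cite{Carstens2016}.)

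Irreducibility and aperiodicity of the global-trade chain then come quickly. Aperiodicity: for any $G$, fixing any $\pi$ and taking $\sigma=\mathrm{id}$ on every pair leaves $G$ unchanged, so $P[G,G]>0$. Irreducibility: the single-trade chain is irreducible \cite{Carstens2016}, so any $G_1,G_2\in\Omega_G$ are joined by a finite sequence of single trades; a single trade on a pair $p$ is simulated by a global trade that assigns the realising $\sigma$ to $p$ and $\mathrm{id}$ to every other pair (which then leaves the graph untouched irrespective of its position in the sequence), an event of positive probability, and concatenating such global trades yields a positive-probability path from $G_1$ to $G_2$.

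The only genuinely delicate point is symmetry of $P$, since a product of symmetric stochastic matrices need not be symmetric. I would resolve it by noting that reversing the order of the $\ell$ pairs of a global trade is realised by a bijection of $S_n$: right-composing $\pi$ with the fixed index permutation that reverses the order of the consecutive blocks $\{1,2\},\{3,4\},\dots$ (fixing the last index if $n$ is odd) produces a permutation $\hat\pi$ whose pair sequence is $p^\pi_\ell,\dots,p^\pi_1$, and $\pi\mapsto\hat\pi$ is a bijection. Using that each $Q_p$ is symmetric, $P^{\mathsf T}=\frac{1}{n!}\sum_\pi\big(Q_{p^\pi_1}\cdots Q_{p^\pi_\ell}\big)^{\mathsf T}=\frac{1}{n!}\sum_\pi Q_{p^\pi_\ell}\cdots Q_{p^\pi_1}=\frac{1}{n!}\sum_\pi Q_{p^{\hat\pi}_1}\cdots Q_{p^{\hat\pi}_\ell}=P$, so $P$ is symmetric and the chain converges to the uniform distribution on $\Omega_G$. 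I expect the verification of the trade-invariance of $D_A$ — hence the per-pair symmetry of $Q_p$, on which the reversal argument rests — to be the main obstacle; the remaining steps are routine.
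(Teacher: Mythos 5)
Your proposal is correct and follows essentially the same route as the paper: aperiodicity and irreducibility via identity permutations that let a global trade either fix the graph or degenerate to a single trade, and symmetry via reversing the order of the $\ell$ pairs and using per-trade reversibility. The only cosmetic difference is that you package the symmetry step as $P^{\mathsf T}=P$ for a product of symmetric per-pair operators $Q_p$ (proving $Q_p[A,B]=Q_p[B,A]$ directly by counting permutations), whereas the paper tracks the intermediate states along an explicit trade sequence and cites \cite{Carstens2016} for the single-trade identity $\P_A(t)=\P_B(\tilde t)$.
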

\begin{proof}
	In order to prove the claim, we have to show irreducibility and aperiodicity of the Markov chain as well as symmetry of the transition probabilities.

  For the first two properties it suffices to show that whenever there exists a single trade from state $A$ to $B$, there also exists a global trade from $A$ to $B$ (see \cite{Carstens2015} for a similar argument).\footnote{%
  	Since each global trade can be emulated by its $n/2$ decomposed single trades, the reverse is true for a hop of $n/2$ single trade steps.
  	Due to dependencies however the transition probabilities generally do not match, see $V = \{1, 2, 3, 4\}$ and $E = \{[1,2],[3,4]\}$ for a simple counterexample.
  }
  Observe that there is a non-zero probability that a single trade does not change the graph, e.g. by selecting $\sigma_i$ as the identity.
  Hence there is a non-zero probability that \ldots
  \begin{itemize}
    \item a global trade does not alter the graph at all. 
    This corresponds to a self-loop at each state of the Markov chain and hence guarantees aperiodicity.
    \item all but one single trade of a global trade do not alter the graph.
    In this case, a global trade degenerates to a single trade and the irreducibility shown in~\cite{Carstens2015} carries over.
  \end{itemize}

	It remains to show that the transition probabilities are symmetric. Let $\mathcal T_{AB}^g$ be the set of global trades that transform state $A$ to state $B$.
  Then the transition probability between $A$ and $B$ equals the sum of probabilities of selecting a trade sequence from $\mathcal T_{AB}^g$.
  That is $P_{AB} = \sum\nolimits_{T \in \mathcal T_{AB}^g}\mathbf P_A(T)$
	where $\mathbf P_A(T)$ denotes the probability of selecting global trade $T$ in state $A$. 

	The probability $\mathbf P_A(t)$ of selecting a single trade $t = (i, j, \sigma)$ from state $A$ to state $B$ equals the probability $\mathbf P_B(\tilde{t})$ of selecting the reverse trade $\tilde{t} = (i, j, \sigma^{-1})$ from state $B$ to $A$ \cite{Carstens2016}.
	We now define the reverse global trade of $T=(t_1, \ldots, t_\ell)$ as $\tilde{T} = (\tilde{t}_\ell, \ldots, \tilde{t}_1)$.
	It is straight-forward to check that this gives a bijection between the sets $\mathcal T^g_{AB}$ and $\mathcal T^g_{BA}$. 

	It remains to show that the middle equality holds in
	\begin{equation*}
	P_{AB} = \sum\nolimits_{T \in \mathcal T_{AB}^g}\mathbf P_A(T)\ \ \stackrel{!}{=}\ \  \sum\nolimits_{\tilde{T} \in \mathcal T_{BA}^g}\mathbf P_B(\tilde{T}) = P_{BA}.
	\end{equation*}

	Let $T=(t_1, \ldots, t_\ell)$ be a global trade from state~$A$ to state~$B$ as implied by $\pi$ and $A = A_1, \ldots, A_{\ell+1} = B$ be the intermediate states.
	We denote the reversal of $T$ and $\pi$ as $\tilde T$ and $\tilde \pi$ respectively and obtain 
	\[
		P_A(T) 
  = \mathbf P(\pi) \mathbf P_{A_1}(t_1) \ldots \mathbf P_{A_\ell}(t_\ell) 
	= \mathbf P(\tilde{\pi}) \mathbf P_{B}(\tilde{t}_\ell)  \ldots \mathbf P_{A_2}(\tilde{t}_1)    
  = P_B(\tilde{T})
	.\]
	Clearly $\mathbf P(\pi) = \mathbf P(\tilde{\pi})$ as we are picking permutations uniformly at random. The second equality follows from $\mathbf P_A(t) = \mathbf P_B(\tilde{t})$ for a single trade between $A$ and $B$. 
\end{proof} %
\section{Novel Curveball algorithms for undirected graphs}\label{sec:algo}
In this section we present the related algorithms \emcb{}, \imcb{}, \emgcb{} and \empgcb{}.
The algorithms receive a simple graph $G$ and a \emph{trade sequence} $T = \sequence{ \{u_i, v_i\} }{i=1}{\ell}$ as input and compute the result of carrying out the trade sequence $T$ (see section~\ref{subsection:cb}) in order.

\emcb{} and \imcb{} are sequential solutions suited to process arbitrary trade sequences~$T$.
For our analysis, we assume $T$'s constituents to be drawn uniformly at random (as expected in typical applications).
Both algorithms share a common design, but differ in the data structures used.
\emcb{} is an I/O-efficient algorithm while \imcb{} is optimised for small graphs allowing for unstructured accesses to main memory.
In contrast, \emgcb{} and  \empgcb{} process global trades only.
This restricted input model allows us to represent the trade sequence~$T$ implicitly by hash functions which further accelerates trading.

At core, all algorithms perform trades in a similar fashion: 
In order to carry out the $i$-th trade $\{u_i, v_i\}$, they retrieve the neighbourhoods $\Nei{u_i}$ and $\Nei{v_i}$, shuffle%
\footnote{%
  In contrast to Definition \ref{def:globaltrade}, we do not consider the permutation $\sigma$ of disjoint neighbours as part of the input, but let the algorithm choose one randomly for each trade.
  We consider this design decision plausible as the set of disjoint neighbours only emerges over the course of the execution.
}
them, and then update the graph.
Once the neighbourhoods are known, trading itself is straight-forward.
We compute the set of disjoint neighbours $D = (\Nei{u_i} \cup \Nei{v_i}) \setminus (\Nei{u_i} \cap \Nei{v_i})$ and then draw $|\Nei{u_i} \cap D|$ nodes from $D$ for $u_i$ uniformly at random while the remaining nodes go to $v_i$.
If $\Nei{u_i}$ and $\Nei{v_i}$ are sorted this requires only $\O(|\Nei{u_i}| + |\Nei{v_i}|)$ work and $\scan(|\Nei{u_i}| + |\Nei{v_i}|)$ I/Os (see also proof of Lemma~\ref{lem:imcb-complex} if the neighbourhoods fit into RAM).
Hence we focus on the harder task of obtaining and updating the adjacency information.

\subsection{\emcbHead: A sequential I/O-efficient Curveball algorithm}\label{subsec:algo-emcb}
\begin{myalgorithm}[t]
    \SetKwData{sortTtoV}{SorterTtoV}
    \SetKwData{sortDep}{SorterDepChain}
    \SetKwData{pqVtoV}{PqVtoV}
    \SetKwData{pqTtoT}{PqTtoT}

    \KwData{Trade sequence $T$, simple graph $G=(V, E)$ by edge list $E$} 
    
    \tcp{Preprocessing: Compute Dependencies}
    \ForEach{trade $t_i = (u, v) \in T$ for increasing $i$}{
        Send messages $\langle u, t_i \rangle$ and $\langle v, t_i \rangle$ to Sorter \sortTtoV\;
    }
    Sort \sortTtoV lexicographically \tcp*{All trades of a node are next to each other}
    \ForEach{node $u \in V$}{
        Receive $\Suc(u) = [t_1, \ldots, t_k]$ from $k$ messages addressed to $u$ in \sortTtoV\;
        Set $t_{k+1} \gets \infty$ \tcp*{$t_1 = \infty$ iff $u$ is never active} 
        Send $\langle t_i, u, t_{i+1} \rangle$ to \sortDep for $i \in [k]$\;
        
        \ForEach{directed edge $(u, v) \in E$}{
            \uIf{$u < v$}{
                Send message $\langle v, u, t_1 \rangle$ via \pqVtoV
            }\Else{
            Receive $t_1^v$ from unique message received via \pqVtoV\;
            $\arraycolsep=1.4pt\begin{array}{ll}
                \lIf{$t_1 \le t_1^v$}{} & \text{Send message $\langle t_1, u, v, t_1^v \rangle$ via \pqTtoT} \\ 
                \lElse{} &  \text{Send message $\langle t_1^v, v, u, t_1 \rangle$ via \pqTtoT}
            \end{array}$
    }  
}
}
Sort \sortDep\;

\tcp{Main phase -- Currently at least the first trade has all information it needs}
\ForEach{trade $t_i = (u, v) \in T$ for increasing $i$}{
    Receive successors $\tau(u)$ and $\tau(v)$ via \sortDep\;
    Receive neighbours $\AG(u)$, $\AG(v)$ and their successors $\tau(\cdot)$ from \pqTtoT\;
    
    Randomly reassign disjoint neighbours, yielding new neighbours $\AG'(u)$ and $\AG'(v)$.
    
    \ForEach{$(a, b) \in (\{u\} \times \AG'(u)) \cup (\{v\} \times \AG'(v))$}{
        $\arraycolsep=1.4pt\begin{array}{ll}
            \lIf{$\tau_a = \infty$ and $\tau_b = \infty$}{} & \text{Output final edge $\{a,b\}$} \\
            \lElseIf{$\tau_a \le \tau_b$}{} & \text{Send message $\langle \tau_a, a, b, \tau_b \rangle$ via \pqTtoT} \\
            \lElse{} & \text{Send message $\langle \tau_b, b, a, \tau_a \rangle$ via \pqTtoT}
        \end{array}$
    }
}

\caption{\emcb{}}
\label{algo:emcb}
\end{myalgorithm}

\emcb{} is an I/O-efficient Curveball algorithm to randomise undirected graphs as detailed in Alg.~\ref{algo:emcb}.
This basic algorithm already contains crucial design principles which we further explore with \imcb{}, \emgcb{} and \empgcb{} in sections~\ref{subsec:algo-imcb} and~\ref{subsec:algo-empgcb} respectively.

The algorithm encounters the following challenges.
After an undirected trade $\{u, v\}$ is carried out, it does not suffice to only update the neighbourhoods $\Nei{u}$ and $\Nei{v}$:
consider the case that edge $\{u, x\}$ changes into $\{v, x\}$.
Then this switch also has to be reflected in the neighbourhood of $\Nei{x}$.
Here, we call $u$ and $v$ \emph{active} nodes while $x$ is a \emph{passive} neighbour.

In the EM setting another challenge arises for graphs exceeding main memory;
it is prohibitively expensive to directly access the edge list since this unstructured pattern triggers $\Omega(1)$ I/Os for each edge processed with high probability.

\emcb{} approaches these issues by abandoning a classical static graph data structure containing two redundant copies of each edge.
Following the \tfp{} principle, we rather interpret all trades as a sequence of points over time that are able to receive messages.
Initially, we send each edge to the earliest trade one of its endpoints is active in.\footnote{%
  If an edge connects two nodes that are both actively traded we implicitly perform an arbitrary tie-break.}
This way, the first trade receives one message from each neighbour of the active nodes and hence can reconstruct $\Nei{u_1}$ and $\Nei{v_1}$.
After shuffling and reassigning the disjoint neighbours, \emcb{} sends each resulting edge to the trade which requires it next.
If no such trade exists, the edge can be finalised by committing it to the output.

The algorithm hence requires for each (actively or passively) traded node $u$, the index of the next trade in which $u$ is actively processed.
We call this the \emph{successor} of $u$ and define it to be $\infty$ if no such trade exists.
The dependency information is obtained in a preprocessing step; given $T = \sequence{ \{u_i, v_i\} }{i=1}{\ell}$, we first compute for each node $u$ the monotonically increasing index list $\Suc(u)$ of trades in which $u$ is actively processed, i.e. $\Suc(u) \defrel \big[\, i\, |\, u \in t_i \text{ for } i \in [\ell]\,\big] \circ [\infty]$.
\begin{myexp}
  Let $G = (V, E)$ be a simple graph with $V = \{v_1, v_2, v_3, v_4\}$ and trade sequence
  $T = [
  \textcolor{blue}  {t_1{:}}~\{v_1, v_2\},
  \textcolor{red}   {t_2{:}}~\{v_3, v_4\},
  \textcolor{olive} {t_3{:}}~\{v_1, v_3\},
  \textcolor{orange}{t_4{:}}~\{v_2, v_4\},
  {t_5{:}}~\{v_1, v_4\}
  ]$.
  Then, the successors $\Suc$ follow as
  $\Suc(v_1) = [\textcolor{blue}{1}, \textcolor{olive}{3}, 5, \textcolor{gray}{\infty}]$,
  $\Suc(v_2) = [\textcolor{blue}{1}, \textcolor{orange}{4}, \textcolor{gray}{\infty}]$,
  $\Suc(v_3) = [\textcolor{red}{2}, \textcolor{olive}{3}, \textcolor{gray}{\infty}]$,
  $\Suc(v_4) = [\textcolor{red}{2}, \textcolor{orange}{4}, 5, \textcolor{gray}{\infty}]$.
\end{myexp}

\noindent This information is then spread via two channels:
\begin{itemize}
  \item After preprocessing, \emcb{} scans $\Suc$ and $T$ conjointly and sends $\langle t_i, u_i, t^u_i \rangle$ and $\langle t_i, v_i, t^v_i \rangle$ to each trade $t_i$.
  The messages carry the successors $t^u_i$ and $t^v_i$ of the trade's active nodes.
  
  \item When sending an edge as described before, we augment it with the successor of the passive node.
  Initially, this information is obtained by scanning the edge list $E$ and $\Suc$ conjointly.
  Later, it can be inductively computed since each trade receives the successors of all nodes involved.
\end{itemize}

\begin{lemma}\label{lem:emcb-complex}
 	For an arbitrary trade sequence $T$ of length $\ell$, \emcb{} has a worst-case I/O complexity of
  $\O[\sort(\ell) + \sort(n) + \scan(m) + \ell d_{\max} / B \log_{M/B} (m / B)]$. 
  For $r$ global trades, the worst case I/O complexity is $\O(r [\sort(n) + \sort(m)])$.
\end{lemma}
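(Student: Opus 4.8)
The plan is to trace the I/O cost of each phase of \emcb{} separately and then sum. I identify four phases: (i) the preprocessing that builds the successor lists $\Suc(u)$ for all nodes; (ii) the dissemination of successor information to the trades via the two channels described above; (iii) the main loop that, for each trade $t_i$, collects the incoming messages (neighbourhoods plus successors), shuffles, reassigns disjoint neighbours, and forwards the resulting edges; and (iv) the finalisation of edges that have no further trade. The key observation is that the whole algorithm is an instance of Time Forward Processing on the dependency graph whose nodes are the $\ell$ trades, so by the TFP bound it costs $\O(\sort(k))$ I/Os where $k$ is the total number of messages ever sent.

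First I would bound $k$. Each of the $m$ edges is, at any point in time, ``assigned'' to exactly one pending trade (the earliest trade actively touching one of its endpoints), so at most $\O(m)$ edge-messages are in flight per trade-step; but crucially the \emph{total} number of edge-messages sent over the whole run is bounded by counting, for each trade $t_i$, the number of edges it emits, which equals $\deg(u_i)+\deg(v_i) \le 2\dmax$. Summing over all $\ell$ trades gives $\O(\ell\dmax)$ edge-messages, plus $\O(\ell)$ successor-messages along the first channel, plus the $\O(m)$ initial edge placements and the $\O(n)$ cost of scanning $\Suc$ conjointly with $E$. Feeding $k = \O(m + n + \ell\dmax)$ into the TFP / external-PQ bound yields $\O(\sort(m + n + \ell\dmax))$; using subadditivity of $\sort$ and $\scan(m)\le\sort(m)$, and separating the $\ell\dmax$ term (which needs the full $\log_{M/B}$ factor because $\ell\dmax$ may exceed $m$), this gives the claimed $\O[\sort(\ell) + \sort(n) + \scan(m) + (\ell\dmax/B)\log_{M/B}(m/B)]$. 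The preprocessing in phase (i) is itself a constant number of sorts and scans of $T$ (length $\ell$) and $E$ (length $m$), hence absorbed.

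For the second statement, I would specialise to $r$ consecutive global trades. A single global trade consists of $\ell = \lfloor n/2\rfloor$ single trades, so the term $\ell\dmax$ becomes $\Theta(n\dmax)$. The point is that in a global trade \emph{every} node is active exactly once, so the total number of edge-messages emitted across one global trade is $\sum_{v} \deg(v) = 2m$, not $n\dmax$ --- i.e. the naive $\ell\dmax$ bound is replaced by the exact $\Theta(m)$ bound, because the per-trade $\deg(u_i)+\deg(v_i)$ terms now partition the degree sum rather than each being as large as $2\dmax$. Likewise the successor lists and trade-index structures have total size $\Theta(n)$ per global trade. Thus one global trade costs $\O(\sort(n) + \sort(m))$ I/Os, and $r$ of them cost $\O(r[\sort(n)+\sort(m)])$; here the TFP processing may be restarted per global trade or run as one chain over all $r\ell$ trades — either way the bound holds since $\sort$ is (at most) linear up to the mild $\log_{M/B}$ factor which is subsumed in the $\sort$ notation once $k = \O(rn + rm)$.

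The main obstacle is the bookkeeping in phase (iii): one must argue carefully that when \emcb{} reaches trade $t_i$, \emph{all} messages it needs — the neighbourhoods of $u_i$ and $v_i$, and the successor indices of every node involved (active and passive) — are simultaneously available as the current minimum in the priority queue, and that no edge is ever lost or duplicated when an edge's active endpoint changes from $u_i$ to $v_i$ (the tie-break footnote and the symmetry-maintenance of Definition~\ref{def:trade}). This is the inductive invariant that makes the TFP reduction legitimate; once it is established, the I/O accounting above is routine. I would therefore state this invariant explicitly (each edge carries the successor of its passive endpoint; each trade learns the successors of all its active nodes) and verify it is preserved by one trade, then invoke the generic TFP bound from Section~\ref{subsec:tfp} as a black box.
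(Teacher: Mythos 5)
Your proposal is correct and follows essentially the same route as the paper's proof: preprocessing is a constant number of scans and sorts over $T$, $E$ and the successor structures, the main phase is charged $\deg(u_i)+\deg(v_i)\le 2\dmax$ messages per trade, and for $r$ global trades the per-round message volume is $\sum_v \deg(v) = 2m$ rather than $n\dmax$, giving $\O(r[\sort(n)+\sort(m)])$. The one step to state more carefully is the last term: $(\ell\dmax/B)\log_{M/B}(m/B)$ does \emph{not} follow from $\sort(\ell\dmax)$ by subadditivity (that would yield the larger factor $\log_{M/B}(\ell\dmax/B)$), but exactly from the observation you make in passing — that the priority queue never holds more than $\O(m)$ items at once, so each of the $\O(\ell\dmax)$ PQ operations costs $\O(\frac{1}{B}\log_{M/B}(m/B))$ amortised — which is precisely how the paper derives it.
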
  

\begin{proof}
  Refer to Appendix \ref{sec:app-emcb} for the proof.
\end{proof}

\subsection{\imcbHead: An internal memory version of \emcbHead}\label{subsec:algo-imcb}
While \emcb{} is well-suited if memory access is a bottleneck, we also consider the modified version \imcb{}.
As shown in \autoref{sec:experiments}, \imcb{} is typically faster for small graph instances.
\imcb{} uses the same algorithmic ideas as \emcb{} but replaces its priority queues and sorters\footnote{%
  The term \emph{sorter} refers to a container with two modes of operation:
  in the first phase, items are pushed into the write-only sorter in an arbitrary order by some algorithm.
  After an explicit switch, the filled data structure becomes read-only and the elements are provided as a lexicographically non-decreasing stream which can be rewound at any time.
  While a sorter is functionally equivalent to filling, sorting and reading back an EM vector, the restricted access model reduces constant factors in the implementation's runtime and I/O-complexity~\cite{Beckmann2009}.
} by unstructured I/O into main memory (see Alg.\ref{algo:imcb} (Appendix) for details):
\begin{itemize}
  \item
    Instead of sending neighbourhood information in a TFP-fashion, we now rely on a classical adjacency vector data structure $\AG$ (an array of arrays).
    Similarly to \emcb{}, we only keep one directed representation of an undirected edge.
    As an invariant, an edge is always placed in the neighbourhood of the incident node traded before the other.
    To speed-up these insertions, \imcb{} maintains unordered neighbourhood buffers.
    
  \item
    \imcb{} does not forward successor information, but rather stores $\Suc$ in a contiguous block of memory.
    The algorithm additionally maintains the vector $\Suc_\text{idx}[1 \ldots n]$ where the $i$-th entry points to the current successor of node $v_i$.
    Once this trade is reached, the pointer is incremented giving the next successor.
  
\end{itemize}

\begin{mylem}\label{lem:imcb-complex}
	For a random trade sequence $T$ of length $\ell$, \imcb{} has an expected running time of $\O(n + \ell + m + \ell m/n)$.
	In the case of $r$ many global trades (each consisting of $n/2$ normal trades) the running time is given by
	$\O(n + r m)$.
\end{mylem}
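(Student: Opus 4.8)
The plan is to charge \imcb{}'s running time to three phases that can be analysed independently: (i) building the successor lists $\Suc$ and the auxiliary vectors; (ii) initialising the adjacency structure $\AG$; and (iii) the main loop over the $\ell$ trades. Only phase (iii) has a cost that is not a priori linear in the input, so it is the sole place where the randomness of $T$ must be exploited.

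First I would dispatch the two linear phases. The lists $\Suc(v)$ are produced by a single left-to-right pass over $T$ that appends $i$ to $\Suc(u_i)$ and to $\Suc(v_i)$; since $i$ is scanned in increasing order every list is created already sorted, so no sorting step is needed, and as the lists have total length $2\ell + n$ this step --- together with allocating the pointer vector $\Suc_\text{idx}[1\ldots n]$ and a reusable scratch bit-array of size $n$ used below --- costs $\O(n+\ell)$. Reading $G$ and placing every edge $\{a,b\}$ into the buffer of the endpoint with the smaller first successor (read off $\Suc_\text{idx}$ in $\O(1)$) then costs $\O(n+m)$.

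Next I would bound the work of a single trade $i$ by $\O(1+\deg(u_i)+\deg(v_i))$, uniformly in the algorithm's internal coin flips. Advancing $\Suc_\text{idx}[u_i]$ and $\Suc_\text{idx}[v_i]$ past $i$ is $\O(1)$ because $i$ sits at the current front of both lists (so $\O(\ell)$ over all trades). By the placement invariant of \autoref{subsec:algo-imcb} --- every edge resides with the endpoint traded first --- all edges incident to $u_i$ or $v_i$ are, at this instant, stored in that node's buffer, so the two neighbourhoods are read in $\O(\deg(u_i)+\deg(v_i))$; marking $\Nei{u_i}$ in the bit-array then yields $\Nei{u_i}\cap\Nei{v_i}$, the disjoint set $D$, its random split between $u_i$ and $v_i$, and the shuffle $\sigma$ in time linear in the degrees, and clearing only the touched bits afterwards keeps the array reusable without an $\O(n)$ reset. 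Finally, each of the $\O(\deg(u_i)+\deg(v_i))$ resulting edges is appended in amortised $\O(1)$ to whichever of its endpoints is traded next, a decision made in $\O(1)$ by comparing the two relevant $\Suc_\text{idx}$ entries (all of which, by the invariant, already point beyond $i$). I expect the main obstacle to be precisely this bookkeeping argument: one has to verify that the invariant really guarantees every incident edge is present in an active node's buffer when that trade is executed, and that $\Suc_\text{idx}$ is current for passive neighbours as well as active ones, so that no single edge is ever charged more than $\O(1)$.

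Finally I would assemble the pieces. For a random trade sequence each of $u_i, v_i$ is a uniformly random node, and a trade never changes the degree sequence (by \autoref{def:trade}), so $\E[\deg(u_i)+\deg(v_i)] = (2/n)\sum_v\deg(v) = 4m/n$ at \emph{every} step; linearity of expectation then gives expected $\O(\ell+\ell m/n)$ for the loop, which with phases (i) and (ii) sums to $\O(n+\ell+m+\ell m/n)$. For $r$ global trades I would keep the deterministic per-trade estimate $\O(1+\deg(u_i)+\deg(v_i))$ but sum it differently: within one global trade the active endpoints form a permutation of $V$, so $\sum_{k}(\deg(u_k)+\deg(v_k)) = \sum_v\deg(v) = 2m$, whence one global trade costs $\O(n+m)$ (the $\O(n)$ absorbing its $n/2$ trades). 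Together with the $\O(n+m)$ setup and the $\O(\ell)=\O(rn)$ spent on $\Suc$ this gives $\O(r(n+m))=\O(n+rm)$.
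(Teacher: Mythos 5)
Your proposal is correct and follows essentially the same route as the paper: the same three-phase decomposition ($\O(n+\ell)$ for $\Suc$, $\O(n+m)$ for filling $\AG$, and $\O(\deg(u_i)+\deg(v_i))$ per trade), the same expectation argument over uniformly random trade endpoints, and the same exact summation $\sum_v \deg(v) = 2m$ per global trade. The only (harmless) deviation is that you compute the disjoint neighbours with a selectively-cleared bit-array, giving a deterministic per-trade bound, whereas the paper inserts $\Nei{u_i}$ into a hash set and settles for expected time.
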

\begin{proof}
  
  Refer to Appendix \ref{sec:app-imcb} for the proof.
\end{proof}

\subsection{\emgcbHead{}: An I/O-efficient Global Curveball algorithm}\label{subsec:algo-emgcb}
\begin{figure}
    \begin{center}  
        \vspace{-1em}
        \scalebox{0.75}{%
\begin{tikzpicture}[
  cell/.style={draw, outer sep=0,
    minimum width=2.5em, minimum height=1em, node distance=0pt, font=\Large},
  pindex/.style={gray, node distance=0pt, font=\footnotesize},
  edgetext/.style={node distance=0pt,font=\Large,text depth=0pt, text height=0.5ex}  
]
  \newcommand{\drawperm}[3]{
    \node[] (p#1-0) at (#2) {};
    \path[draw, thick] ($(p#1-0.east) - (0, 1em)$) to ($(p#1-0.east) + (0, 1em)$);
    \foreach \x/\p [count=\i] in {#3} {
       \node[cell, right=of p#1-\intcalcDec{\i}] (p#1-\i) {$v_\p$};
       \node[pindex, above=of p#1-\i] {$\pi_#1(\i)$};
        
       \ifthenelse{\equal{\intcalcMod{\i}{2}}{0}}{
         \path[draw, thick] ($(p#1-\i.east) - (0, 1em)$) 
           to ($(p#1-\i.east) + (0, 1em)$);
       }{}
    }
  }

  \drawperm{1}{0,0}{3,1,2,5,4,6}
  \drawperm{2}{24em,0}{6,3,5,1,2,4}
  
  \draw [decorate,decoration={brace,amplitude=10pt,raise=1em}]
   (p1-3.north west) to node [black,midway,above,yshift=2em,align=center,font=\small ] {current trade} (p1-4.north east);

  \node[edgetext] (et-comma) at (17em, -5em) {$,$};
  \node[edgetext, left =of et-comma] (et-vl) {$v_1$};
  \node[edgetext, right=of et-comma] (et-vr) {$v_2$};

  \node[edgetext, left =of et-vl] {new edge produced: \Large$\{$};
  \node[edgetext, right=of et-vr] {$\}$};

  \path[draw, ->, dashed, out=145, in=290, gray] (et-vl.north) to (p1-2.south);
  \path[draw, ->, out=30, in=220] (et-vl.north) to node [above, xshift=-1.5em, yshift=-0.1em] {
    \small $\langle \text{\footnotesize round: } 2, \text{\footnotesize slot: } 4, \text{\footnotesize neighbour: } v_2 \rangle$} (p2-4.south);

  \path[draw, ->, dashed, out=125, in=290, gray] (et-vr.north) to (p1-3.south);
  \path[draw, ->, dashed, out=20, in=230, gray] (et-vr.north) to (p2-5.south);
\end{tikzpicture} %
}
        \vspace{-1.5em}
    \end{center}
    \caption{%
        During the trade $j{=}1, i_1{=}3, i_2{=}4$ the edge $\{v_1, v_2\}$ is produced; the arrows indicate positions considered as successors.
        Since $v_1$ and $v_2$ are already processed in round $j{=}1$, $\pi_2$ is used to compute the successor.
        Then, the message is sent to $v_1$ in round~2 as $v_1$ is processed before $v_2$.
    }
    \label{fig:permutation_trading}
\end{figure}

\emgcb{} builds on \emcb{} and exploits the regular structure of global trades to simplify and accelerate the dependency tracking.
As discussed in section~\ref{subsec:gcb}, a global trade can be encoded as a permutation $\pi\colon [n] \rightarrow [n]$ by interpreting adjacent ranks as trade pairs, i.e. $T_\pi = \sequence{\{v_{\pi(2i-1)}, v_{\pi(2i)}\}}{i=1}{n/2}$.
In this setting, a sequence of global trades is given by $r$ permutations $\sequence{\pi_j}{j=1}{r}$.
The model simplifies dependencies as it is not necessary to explicitly gather $\Suc$ and communicate successors.

As illustrated in Fig.~\ref{fig:permutation_trading}, we also change the addressing scheme of messages.
While \emcb{} sends messages to specific nodes in specific trades, \emgcb{} exploits that each node $v_i$ is actively traded only once in each round~$j$ and hence can be addressed by its position~$\pi_j(i)$.
Successors can then be computed in an ad hoc fashion;
let a trade of adjacent positions $i_1 < i_2$ of the $j$-th global trade produce (amongst others) the edge $\{v_x, v_y\}$.
The successor of $v_x$ (and analogously the one of $v_y$) is
$\Suc_{j,i_2}[v_x] = (j, \pi_{j}(x))$ if $v_x$ is processed later in round~$j$ (i.e. $\pi_j(x) / 2 > i_2$) and otherwise $\Suc_{j,i_2}[v_x] = (j{+}1, \pi_{j+1}(x))$.
Here we imply an untraded additional function $\pi_{r+1}(x) = x$ which avoids corner cases and generates an ordered edge list as a result of the $r$-th global trade.

To reduce the computational cost of the successor computation, \emgcb{} supports fast injective functions $f\colon X \rightarrow Y$ where $[n] \subseteq X$ and $[n] \subseteq Y$.
In contrast to the original permutations, their relevant image $\{\, f(x) \ |\  x\in [n]\,\}$ may contain gaps which are simply skipped by \emgcb{}.
This requires minor changes in the addressing scheme (see Appendix~\ref{sec:app-emgcb}).

\goodbreak

In practice, we use functions from the family of linear congruential maps $H_p$ where $p$ is the smallest prime number $p \ge n$:
\begin{align}
	H_p &\defrel& \left\{\, h_{a,b}\ \middle|\ 1 \le a < p \text{ and } 0 \le b < p\, \right\} \\
	h_{a,b}(x) &\equiv& (ax + b) \mod p,
\end{align}

As detailled in Appendix \ref{sec:lincongmap} random choices from $H_p$ are well suited for \emgcb{} since they are 2-universal\footnote{i.e. given one node in a single trade, the other is uniformly chosen among the remaining nodes.} and contain only $\O(\log(n))$ gaps.
They are also bijections with an easily computable inverse $h^{-1}_{a,b}$ that allows \emgcb{} to determine the active node  $h^{-1}_{a,b}(i)$ traded at position $i$; this operation is only performed once for each traded position. 
\emgcb{} also supports non-invertible functions.
This can be implemented with messages $\langle h(i), i \rangle$ that are generated for $1 \le i \le n$ and delivered using TFP.

\subsection{\empgcbHead: An I/O-efficient parallel Global Curveball algorithm}\label{subsec:algo-empgcb}
\empgcb{} adds parallelism to \emgcb{} by concurrently executing multiple sequential trades. 
As in Fig.~\ref{fig:microchunks}, we split a global trade into \emph{microchunks} each containing a similar number of node pairs and then execute a \emph{batch} of $p$ such subdivisions in parallel.
The batch's size is a compromise between intra-batch dependencies (messages are awaited from another processor) and overhead caused by synchronising threads at the batch's end (see Appendix \ref{sec:app-empgcb}).

\begin{figure}[t]
  \begin{center}
    \vspace{-1.5em}
    \scalebox{0.8}{%
\def\macroWidth{2.5em}
\begin{tikzpicture}[
  every node/.style={node distance=},
  macro/.style={draw, minimum width=\macroWidth, minimum height=1em},
  batch/.style={draw, minimum width=1.5em, minimum height=0.8em},
  micro/.style={draw, minimum width=1em, minimum height=0.5em},
  pindex/.style={gray, node distance=0pt, font=\footnotesize},
  edgetext/.style={node distance=0pt,font=\Large,text depth=0pt, text height=0.5ex},
  onif/.code args={<#1>#2}{\ifthenelse{#1}{\pgfkeysalso{#2}}{}}
]
  \newcommand{\drawmacro}[3]{
    \node[] (macro#1-0) at (#2) {};
    \foreach \state/\f [count=\i] in {#3} {
      \node[opacity=0, onif={<\equal{\state}{p}>, opacity=1},
        minimum width=\f*\macroWidth, minimum height=1em,
        anchor=west, fill=blue!20]
        (macro#1-\i-fill) at (macro#1-\intcalcDec{\i}.east) {};
      \path[onif={<\equal{\state}{p}>, draw},
            red, ultra thick] (macro#1-\i-fill.north east) to (macro#1-\i-fill.south east);

      \node[macro,
            onif={<\equal{\state}{d}> fill=black!10},
            onif={<\equal{\state}{a}> fill=red!30, thick},
            onif={<\equal{\state}{p}> },
            right=of macro#1-\intcalcDec{\i}] (macro#1-\i) {};

      \xdef\macroNum{\i}
    }

    \node[left=of macro#1-1] {\footnotesize 1};
    \node[right=of macro#1-\macroNum] {\footnotesize $k$};
  }
  
  \drawmacro{1}{0,0   }{d/1,   d/1,   d/1,   a/1,   p/0.9, p/0.7, p/0.6}
  \drawmacro{2}{24em,0}{p/0.6, p/0.6, p/0.5, p/0.4, p/0.5, p/0.4, p/0.4}
  
  \node[below=of macro2-4.south east, anchor=east, align=right, yshift=-1em] (label-em) {In EM};
  \node[below=of macro2-4.south east, anchor=east, align=right, yshift=-2.2em] (label-im)  {In IM (front block)};
  
  \path[draw, ->, out=0, in=270] (label-em) to (macro2-6-fill.south);
  \path[draw, ->, out=0, in=270] (label-im) to (macro2-6-fill.south east);

  \node[anchor=south, above=of macro1-4] {current round};
  \node[anchor=south, above=of macro2-4] {next round};
  
  \node[yshift=-2.5em, xshift=-7.5em] (batch-0) at (macro1-4) {};
  \foreach \i in {1,...,10} {
    \node[batch, right=of batch-\intcalcDec{\i}] (batch-\i) {};
  }

  \node[left=of batch-1] {\footnotesize 1};
  \node[right=of batch-10] {\footnotesize $z$};

  \node[anchor=south, below=of macro1-4] {\footnotesize macrochunk};
  \path[draw, out=30,  in=200] (batch-1.north west) to (macro1-4.south west);
  \path[draw, out=150, in=340] (batch-10.north east) to (macro1-4.south east);

  \node[yshift=-2.5em, xshift=-4.5em, micro, opacity=0] (micro-0) at (batch-6) {};
  \foreach \i in {1,...,8} {
    \node[micro, right=of micro-\intcalcDec{\i}] (micro-\i) {};
  }

  \node[left=of micro-1] {\footnotesize 1};
  \node[right=of micro-8] {\footnotesize $p$};

  \node[anchor=south, below=of batch-6] {\footnotesize batch};
  \path[draw, out=30,  in=200] (micro-1.north west) to (batch-6.south west);
  \path[draw, out=150, in=340] (micro-8.north east) to (batch-6.south east);

  \path[draw, bend left, ->, red] (micro-3) to (micro-7);

  \node[right=of micro-8, xshift=3em, align=left, font=\small] (micro-label) {the $p$ microchunks in a batch are processed in parallel};
  \path[draw, ->, shorten >=0.5cm] (micro-label) to (micro-8);
\end{tikzpicture} %
}
    \vspace{-1.5em}
  \end{center}
  
  \caption{%
    \empgcb{} splits each global trade into $k$ \emph{macrochunks} and maintains an external memory queue for each.
    Before processing a macrochunk, the buffer is loaded into IM and sorted, and further subdivided into $z$ batches each consisting of $p$ microchunks.
    A type (ii) message is visualised by the red intra-batch arrow.
  }
  \label{fig:microchunks}
\end{figure}

\empgcb{} processes each microchunk similarly as in \emcb{} but differentiates between messages that are sent (i) within a microchunk, (ii) between microchunks of the same batch (iii) and microchunks processed later.
Each class is transported using an optimised data structure (see below) and only type (ii) messages introduce dependencies between parallel executions and are resolved as follows:
each processor retrieves the messages that are sent to its next trade and checks whether all information required is available by comparing the number of messages to the active nodes' degrees.
If data is missing the trade is skipped and later executed by the processor that adds the last missing neighbour.

For graphs with $m = \O(M^2/B)$ edges\footnote{%
  Even with as little as 1~GiB of internal memory, several billion edges are supported.
}, we optimise the communication structure for type~(iii) messages.
Observe that \empgcb{} sends messages only to the current and the subsequent round.
We partition a round into $k$ \emph{macrochunks} each consisting of $\Theta(n / k)$ contiguous trades.
An external memory queue is used for each macrochunk to buffer messages sent to it; in total, this requires $\Theta(k B)$ internal memory.
Before processing a macrochunk, all its messages are loaded into IM, subsequently sorted and arranged such that missing messages can be directly placed to the position they are required in.
This can also be overlapped with the processing of the previous macrochunk. 
As thoroughly discussed in Appendix \ref{sec:app-empgcb}, the number $k$ of macrochunks should be as small as possible to reduce overheads, but sufficiently large such that all messages of a macrochunk fit into main memory (see Appendix~\ref{section:analysis}).

\begin{mythm}
  \empgcb{} requires $\O(r \cdot [\sort(n) + \sort(m)])$ I/Os to perform $r$ global trades.
\end{mythm}

\begin{proof}
  Observe that we can analyse each of the $r$ rounds individually.
  A constant amount of auxiliary data is needed per node to provision gaps for missing data, to detect whether a trade can be executed and (if required) to invert the permutation. 
  This accounts for $\Theta(n)$ messages requiring $\sort(n)$ I/Os to be delivered.
  Using an ordinary PQ, the analysis of $\emcb$ (see Lemma~\ref{lem:emcb-complex}) carries over, requiring $\sort(m)$ I/Os for a global trade.
\end{proof}

\section{Experimental Evaluation}\label{sec:experiments}
In this section we evaluate the quality of the proposed algorithms and analyse the runtime of our C++ implementations.\footnote{Code used for the presented benchmarks can be found at our fork \url{https://github.com/hthetran/networkit} (\imcb{} and \emcb{}) and \url{https://github.com/massive-graphs/extmem-lfr} (\empgcb{}).}
\emcb{}, \imcb{}, \emgcb{} are designed as modules of NetworKit~\cite{Staudt2016}; due to their superior performance, only the latter two were added to the library and are available since release~4.6.
\empgcb{}'s implementation is developed separately and facilitates external memory data structures and algorithms of STXXL~\cite{Dementiev2008}.

Intuitively, graphs with skewed degree distributions are hard instances for Curveball since it shuffles and reassigns the disjoint neighbours of two trading nodes.
Hence, limited progress is achieved if a high-degree node trades with a low-degree node.
Since our experiments support this hypothesis, we focus on graphs with powerlaw degree distributions as difficult but highly relevant graph instances.
Our experiments use two parameter sets:
\begin{itemize}
	\item \textsl{(lin)} $-$ The maximal possible degree scales linearly as a function of the number $n$ of nodes.
  The degree distribution $\pld{a}{b}{\gamma}$ is chosen as $a = 10$, $b = n/20$ and $\gamma = 2$.
	\item \textsl{(const)} $-$ The extremal degrees are kept constant.
  In this case the parameters are chosen as $a = 50$, $b = 10000$ and $\gamma = 2$.
\end{itemize}
We select these configurations to be comparable with \cite{Hamann2017a} where both parameter sets are used to evaluate \emes{}.
The first setting \textsl{(lin)} considers the increasing average degree of real-world networks as they grow.
The second setting \textsl{(const)} approximates the degree distribution of the Facebook network in May~2011 (refer to \cite{Hamann2017b} for details).
Runtimes are measured on the following off-the-shelf machine:
Intel Xeon E5-2630~v3 (8 cores at 2.40GHz), 64GB RAM, 2$\times$ Samsung 850~PRO SATA~SSD~(1~TB), Ubuntu Linux 16.04, GCC~7.2.

\subsection{Mixing of Edge-Switching, Curveball and Global Curveball}
\label{subsection:empgcbvsemes}

\begin{figure}[t]
	\centering
	\includegraphics[scale=0.68]{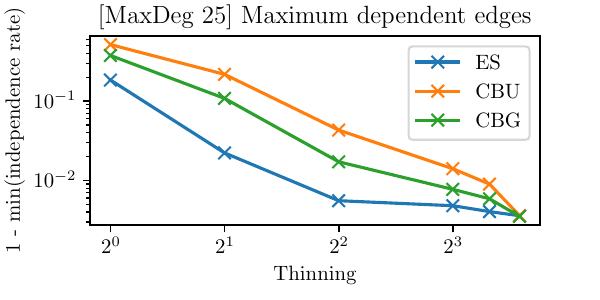}\hfill
	\includegraphics[scale=0.68]{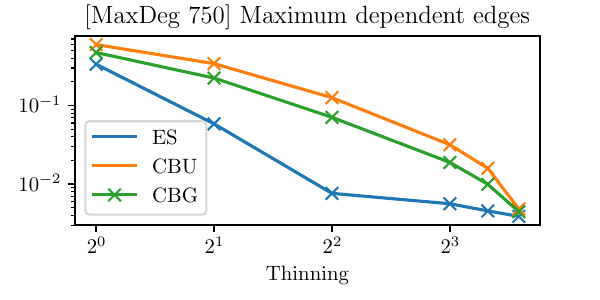}
	\vspace{-0.5em}
	\caption{
		Fraction of edges still correlated as a function of the thinning parameter $k$ for graphs with $n = 2 {\cdot} 10^3$ nodes and degree distribution $\pld{a}{b}{\gamma}$ with $\gamma = 2$, $a = 5$, and $b \in \{25, 750\}$.
		The (not thinned) long Markov chains of edge switching (ES), Curveball with uniform trades (CBU) and Curveball with global trades (CBG) contain 6000 super steps each.
	}
	\label{fig:independentedges}
\end{figure}

We are not aware of any practical theoretical bounds on the mixing time of Markov chains of Curveball, Global Curveball or edge switching (see \autoref{section:randschemes}).
Hence, we quantitatively study the progress made by Curveball trades compared to edge switching and approximate the mixing time of the underlying Markov chains by a method  developed in \cite{Ray2015}.
This criterion is a more sensitive proxy to the mixing time than previously used alternatives, such as the local clustering coefficient, triangle count and degree assortativity~\cite{Hamann2017b}.

Intuitively, one determines the number of Markov chain steps required until the correlation to the initial state decays.
Starting from an initial graph $G_0$, the Markov chain is executed for a large number of steps, yielding a sequence $(G_t)_{t\ge0}$ of graphs evolving over time.
For each occurring edge $e$, we compute a boolean vector $(Z_{e,t})_{t\ge0}$ where a~$1$ at position~$t$ indicates that~$e$ exists in graph $G_t$.
We then derive the \textsl{$k$-thinned} series $(Z_{e,t}^k)_{t\ge0}$ only containing every $k$-th entry of the original vector $(Z_{e,t})_{t\ge0}$ and use $k$ as a proxy for the mixing time.

To determine if $k$ Markov chain steps suffice for edge~$e$ to lose the correlation to the initial graph, the empirical transition probabilities of the $k$-thinned series $(Z_{e,t}^k)_{t\ge0}$ are fitted to both an independent and a Markov model respectively.
If the independent model is a better fit, we deem edge~$e$ to be independent.

The results presented here consider only small graphs due to the high computational cost involved.
However, additional experiments suggest that the results hold for graphs at least one order of magnitude larger which is expected as powerlaw distributions are scale-free.

We compare a sequence of uniform (single) trades, global trades and edge switching and visually align the results of these schemes by defining a \emph{super step}.
Depending on the algorithm a super step corresponds to either a single global trade, $n/2$ uniform trades or $m$ edge-swaps.
Comparing $n/2$ uniform trades with a global trade seems sensible since a global trade consists of exactly $n/2$ single trades, furthermore randomising with $n/2$ single trades considers the state of $2m$ edges which is also true for $m$ edge-swaps.
The alignment accounts for the fact that a single Curveball Markov chain step may execute multiple neighbour switches, thus easily outperforming \esmc{} in a step-by-step comparison.

Fig.~\ref{fig:independentedges} contains a selection of results obtained for small powerlaw graph instances using this method (see Appendix \ref{sec:experiment-compare-global-trades-to-trades} for the complete dataset).
Progress is measured by the fraction of edges that are still classified as correlated, i.e. the faster a method approaches zero the better the randomisation.
We omit an in-depth discussion of uniform trades and rather focus on global trades which consistently outperform the former (cf. section~\ref{subsection:cb}).

In all settings \esmc{} shows the fastest decay.
The gap towards global trades growths temporarily as the maximal degree is increased which is consistent with our initial claim that skewed degree distributions are challenging for Curveball.
The effect is however limited and in all cases performing $4$~global trades for each edge switching super step gives better results.
This is a pessimistic interpretation since typically $10m$ to $100m$ edge switches are used to randomise graphs in practice;
in this domain global trades perform similarly well and $20$~global trades consistently give at least the quality of $10m$ edge switches.

\subsection{Runtime performance benchmarks}
We measure the runtime of the algorithms proposed in section~\ref{sec:algo} and compare them to two state-of-the-art edge switching schemes (using the authors' C++ implementations):
\begin{itemize}
  \item 
    \vles{} is a sequential IM algorithm with a hashing-based data structure optimised for efficient neighbourhood queries and updates~\cite{DBLP:journals/corr/abs-cs-0502085}.
    To achieve comparability, we removed connectivity tests, fixed memory management issues, and adopted the number of swaps.
  \item
  \emes{} is an EM edge switching algorithm and part of \emlfr{}'s toolchain~\cite{Hamann2017a}.
\end{itemize}  

We carry out experiments using the \textsl{(const)} and \textsl{(lin)} parameter sets, and limit the problem sizes for internal memory algorithms to avoid exhaustion of the main memory.
For each data point we carry out 10~super steps (i.e. 10~global trades or $10m$ edge swaps) on a graph generated with Havel-Hakimi from a random powerlaw degree distribution.

Figure~\ref{fig:runtimes} presents the walltime per edge and super step including pre-computation\footnote{For \vles{} we report only the swapping process and the generation of the internal data structures.} required by the algorithms but excluding the initial graph generation process. The plots include (mostly small) errorbars corresponding to the unbiased estimation of the standard deviation of $S$ repetitions per data point (with different random seeds).
\begin{figure}[t]
    \centering
    \includegraphics[scale=0.5]{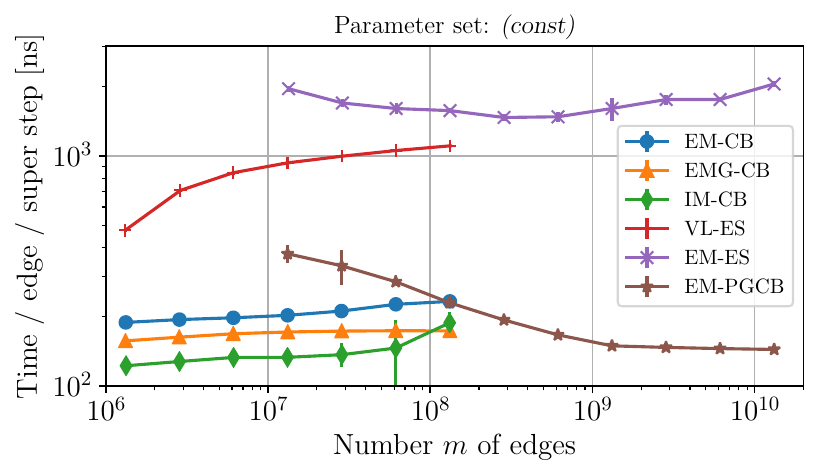}\hfill
    \includegraphics[scale=0.5]{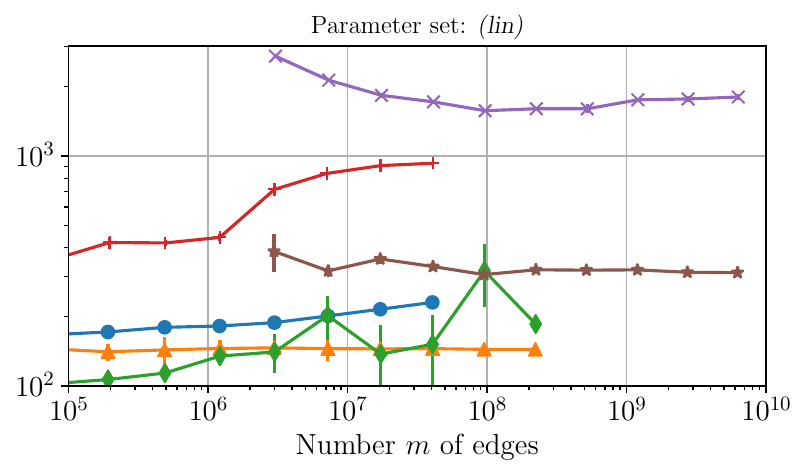}
    \caption{
        Runtime per edge and super step (global trade or $m$ edge swaps) of the proposed algorithms \imcb{}, \emcb{} and \empgcb{} compared to state-of-the-art IM edge switching \vles{} and EM edge switching \emes{}. 
        Each data point is the median of $S \ge 5$ runs over 10~super steps each.    
        The left plot contains the (const)-parameter set, the right one (lin).
        Observe that the super steps of different algorithms advance the randomisation process at different speeds (see discussion).
    }
    \label{fig:runtimes}
\end{figure}

The number $k$ of macrochunks does not significantly affect \empgcb{}'s performance for small graphs due to comparably high synchronisation cost.
In contrast, adjusting $k$ for larger graphs can noticeably increase the performance of \empgcb{}.
We thus experimentally determined the value $k = 32$ for both \textsl{(const)} and \textsl{(lin)} with $n = 10^7$ nodes and use that value for all other instances.

All Curveball algorithms outperform their direct competitors significantly --- even if we pessimistically executed two global trades for each edge switching super step (see section~\ref{subsection:empgcbvsemes}).
For large instances of \textsl{(const)} \empgcb{} carries out one super step $14.3$ times faster than \emes{} and $5.8$ times faster for \textsl{(lin)}. \empgcb{} also shows a superior scaling behaviour with an increasing speed-up for larger graphs.
Similarly, \imcb{} processes super steps up to $6.3$ times faster than \vles{} on \textsl{(const)} and $5.1$ times on \textsl{(lin)}.

On our test machine, the implementation of \imcb{} outperforms \emcb{} in the internal memory regime; \emgcb{} is faster for large graphs.
As indicated in Fig. \ref{fig:runtimes-thinkpen} (Appendix \ref{subsec:thinkpen-results}), this changes qualitatively for machines with slower main memory and smaller cache;
on such systems the unstructured I/O of \imcb{} and \vles{} is more significant rendering \emcb{} and \emgcb{} the better choice with a speed-up factor exceeding $8$ compared to \vles.
\section{Conclusion and outlook}
We applied \emph{global} Curveball trades to undirected graphs simplifying the algorithmic treatment of dependencies and showed that the underlying Markov chain converges to a uniform distribution.
Experimental results show that global trades yield an improved quality compared to a sequence of uniform trades of the same size.

We presented \imcb{} and \emcb{}, the first efficient algorithms for Simple Undirected Curveball algorithms; they are optimised for internal and external memory respectively.
Our I/O-efficient parallel algorithm \empgcb{} exploits the properties of global trades and executes a super step $14.3$ times faster than the state-of-the-art edge switching algorithm \emes{}; 
for \imcb{} we demonstrate speed-ups of up to $6.3$ (in a conservative comparison the speed-ups should be halved to account for the differences in mixing times of the underlying Markov chains).
The implementations of all three algorithms are freely available and are in the process of being incorporated into \emlfr{} and considered for NetworKit. 
\section*{Acknowledgments}
We thank the anonymous reviewers for their many insightful comments and suggestions.

\clearpage
\appendix
\section{Appendix: \emcbHead{}}\label{sec:app-emcb}

\begin{proof}[Proof of Lemma~\ref{lem:emcb-complex}]
  As in Alg.~\ref{algo:emcb}, \emcb{} scans $T$ and $E$ during preprocessing thereby triggering $\O(\scan(\ell) + \scan(m))$ I/Os.
  It also involves sorters \texttt{SorterTtoV} and \texttt{SorterDepChain} as well as priority queues \texttt{PqVtoV} and \texttt{PqTtoT} transporting $\O(\ell)$, $\O(\ell)$, $\O(n)$ and $\O(n)$ messages respectively. Hence preprocessing incurs $\O(\sort(\ell) + \sort(n) + \scan(m))$ I/Os.
  
  During the $i$-th trade $\O(\deg(u_i) + \deg(v_i))$ messages are retrieved shuffled and redistributed causing $\O[\sort(\deg(u_i) + \deg(v_i))]$ I/Os.
  The bound can be improved to  $\O((\deg(u_i) + \deg(v_i))/B\log_{M/B}(m/B))$ by observing that $\O(m)$ items are stored in the PQ at any time.
  For a worst-case analysis we set $\deg(u_i) = \deg(v_i) = d_{\max}$ yielding the first claim.
  
  In case of $r$ global trades, preprocessing can be performed in $r$ chunks of $n/2$ trades each.
  By arguments similar to the previous analysis, this yields an I/O complexity of $\O(r \sort(n) + r \scan(m))$.
  For the main phase, the above analysis tightens to $\O(r\sort(m))$ using the fact that a single global trade targets each edge at most twice.
\end{proof}

\section{Appendix: \imcbHead{}}
\label{sec:app-imcb}

\begin{myalgorithm}[H]
  \SetKwData{sucidx}{$\Suc_\text{idx}$}
  \SetKwData{sucbeg}{$\Suc_\text{begin}$}
  
  \SetKwData{adjidx}{$\mathcal A_\text{idx}$}
  \SetKwData{adjbeg}{$\mathcal A_\text{begin}$}
 
	\KwData{Trade sequence $T$, simple graph $G$}
  
  \tcp{Compute $\Suc$: First count how often a node is active, then store when} 
  $\sucidx[1 \ldots n{+}1] \gets 0$\;
	\ForEach{$\{u, v\} \in T$}{
    $\sucidx[u] \gets \sucidx[u]+1$; \hspace{1em}
    $\sucidx[v] \gets \sucidx[v]+1$\;
  }
  $\sucbeg[i] \gets 1+\sum_{j=1}^{i-1} \sucidx[j]\ \ \forall 1 \le i \le n{+}1$ \tcp*{Exclusive prefix sum with stop marker}
  copy $\sucidx \gets \sucbeg$\;
  
  Allocate $\Suc[1 \ldots 2\ell]$\;
	\ForEach{$t_i = \{u_i, v_i\} \in T$ for increasing $i$}{
    $\Suc[ \sucidx[u_i] ] \gets i$; \hspace{1em} $\sucidx[u_i] \gets \sucidx[u_i] + 1$\;
    $\Suc[ \sucidx[v_i] ] \gets i$; \hspace{1em} $\sucidx[v_i] \gets \sucidx[v_i] + 1$\;
  }
  reset $\sucidx \gets \sucbeg$\;
  
  $\tau_{v_i} \defrel \text{if }(\sucidx[i] == \sucbeg[i+1]) \text{ then } \infty \text{ else } \Suc[\sucidx[i]]$ \tcp*{Short-hand to read successor}
  
  \vspace{1em}
  \tcp{Fill $\AG$}
  $\adjbeg[i] \gets 1 + \sum_{j=1}^{i-1} \deg(v_j)\ \ \forall 1 \le i \le n{+}1$\tcp*{Exclusive prefix sum with stop marker}
  copy $\adjidx \gets \adjbeg$\;
  Allocate $\AG[1 \ldots 2m]$
  
  \ForEach{$\{a,b\} \in E$}{
    \label{algo:imcb-pushnode}

	$\arraycolsep=1.4pt\begin{array}{ll}
	\lIf{$\tau_a \le \tau_b$} & \text{push $b$ into $\AG(a)$: $\AG[\adjidx[a]] \gets b$; \hspace{1em} $\adjidx[a] \gets \adjidx[a] + 1$} \\
	\lElse{} & \text{push $a$ into $\AG(b)$: $\AG[\adjidx[b]] \gets a$; \hspace{1em} $\adjidx[b] \gets \adjidx[b] + 1$}
	\end{array}$
  }

  \vspace{1em}
  \tcp{Trade}

	\ForEach{trade $t_i = (u, v) \in T$ for increasing $i$}{
		Gather neighbours $\AG(u)$, $\AG(v)$ from $\AG$ using $\adjbeg$\;
    Reset $\adjidx[u] \gets \adjbeg[u]$, $\adjidx[v] \gets \adjbeg[v]$\;
    Advance $\sucidx[u]$ and $\sucidx[v]$, s.t. $\tau_u$ and $\tau_v$ gets next trades\;

    Randomly reassign disjoint neighbours, yielding new neighbours $\Nei{u}$ and $\Nei{v}$.
        
    \ForEach{$(a, b) \in (\{u\} \times \AG'(u)) \cup (\{v\} \times \AG'(v))$}{
      \tcp{Push node edge into $\AG$; same as line \ref{algo:imcb-pushnode}}
		$\arraycolsep=1.4pt\begin{array}{ll}
		\lIf{$\tau_a < \tau_b$} & \text{Push $b$ in $\AG(a)$} \\
		\lElse{} & \text{Push $a$ in $\AG(b)$}
		\end{array}$
		}
	}
	
	\caption{\imcb{} as detailled in section~\ref{subsec:algo-imcb}. }
	\label{algo:imcb}
\end{myalgorithm}

\vfill

\begin{proof}[Proof of Lemma~\ref{lem:imcb-complex}]
  As detailled in Alg.~\ref{algo:imcb}, the computation of $\Suc[\cdot]$ and its auxiliary structures involves scanning over $T$ and $V$ resulting in $\O(n + \ell)$ operations.
  Inserting all edges into $\AG$ requires another $\O(n + m)$ steps.
  
  The $i$-th trade takes $\O(\deg(v_i) + \deg(u_i))$ time to retrieve the input edges and distribute the new states.
  To compute the disjoint neighbours, we insert $\Nei{u_i}$ into a hash set and subsequently issue one existence query for each neighbour in $\Nei{v_i}$; this takes expected time $\O(\deg(v_i) + \deg(u_i))$.
  Since $T$'s constituents are drawn uniformly at random, we estimate the neighbourhood sizes as $\mathbf E[\deg(u_i)] = \mathbf E[\deg(v_i)] = m/n$ yielding the first claim.
  In case of $r$ global trades, $T$ consists of $r$ groups with $n/2$ trades targeting all nodes each.
  Hence, trading requires time
  $
  r \sum_i (\deg(u_i) + \deg(v_i)) = r \sum_{v \in V} \deg(v) = \O(r m).
  $
\end{proof} %
\section{Appendix: EM-GCB}
\label{sec:app-emgcb}
Recall that a global trade can be encoded by a permutation $\pi\colon V \to V$ on the nodes or equivalently on the node indices (see section~\ref{subsection:cb}).
Consequently, generating a uniform random permutation on $[n]$ yields a uniform random global trade.
Injective hash-functions have several computational advantages and can substitute the random permutation:

\begin{mydef}[Relaxed global trade]
	\label{def:pglobaltrade}
	Let $h \colon [n] \to \mathbb N$ be an injective hash-function and $\sequence{a_i}{i=1}{n}$ be the image $\sequence{h(i)}{i=1}{n}$ in sorted order.
	Further let $T_h = \sequence{t_i}{i=1}{n/2}$ where $t_i$ trades the nodes with indices $h^{-1}(a_{2i-1})$ and $h^{-1}(a_{2i})$.
	Hence $h$ implies the global trade $T_h$ analogously to a permutation.
\end{mydef}

In this setting, similar to using permutations, a sequence $T$ of global trades is given by $r$ hash-functions $T = \sequence{h_i}{i=1}{r}$.
Again, \emgcb{} uses the fact that each node $v_i$ is actively traded only once in each round $j$ and can then be addressed by $h_j(i)$ (instead of previously $\pi_j(i)$). %
\section{Linear congruential maps}
\label{sec:lincongmap}

We use linear congruential maps as fast injective hash-functions to model global trades for \empgcb{}.
In this section, some of their useful  properties are shown.
We use the notation $\mathbb Z_p = \{0, 1, \ldots,p-1\}$ and $\mathbb Z_p^* = \{1, \ldots, p-1\}$ for $p$ prime and implicitly use $0 \equiv p \mod p$.
Additionally for a map $h: X \to Y$ we denote the image of $h$ as $\mathbf{im}(h) = \{ h(x) : x \in X \}$.
\begin{mydef}[2-universal hashing]
	Let $H$ be an ensemble of maps from $X$ to $Y$ and $h$ be uniformly drawn from $H$.
	For finite $X$ and $Y$ we call the ensemble $H$ \emph{$2$-universal} if for any two distinct $x_1,x_2 \in X$ and any two $y_1, y_2 \in Y$ and uniform random $h \in H$
	\[ \mathbf P(h(x_1) = y_1 \land h(x_2) = y_2) = |Y|^{-2}. \]
\end{mydef}
\begin{myprop}
	\label{prop:congruentialbijective}
	A linear congruential map $h_{a,b}\colon \mathbb Z_p \to \mathbb Z_p, x \mapsto ax + b \mod p$ for $a \ne 0$ and $p$ prime is a bijection.
\end{myprop}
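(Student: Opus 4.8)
The plan is to show that $h_{a,b}$ is injective; since $h_{a,b}$ maps the finite set $\Z_p$ to itself, injectivity immediately yields surjectivity and hence the bijection claim.

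First I would suppose $h_{a,b}(x_1) = h_{a,b}(x_2)$ for some $x_1, x_2 \in \Z_p$. Unfolding the definition gives $a x_1 + b \equiv a x_2 + b \pmod p$, and subtracting $b$ yields $a(x_1 - x_2) \equiv 0 \pmod p$. Since $p$ is prime, $\Z_p$ is a field; as $a \ne 0$ in $\Z_p$, the element $a$ is invertible. Multiplying both sides by $a^{-1}$ gives $x_1 - x_2 \equiv 0 \pmod p$, i.e.\ $x_1 = x_2$ in $\Z_p$. Hence $h_{a,b}$ is injective.

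Finally, an injective self-map of a finite set is automatically a bijection (the image has $|\Z_p|$ distinct elements and therefore equals $\Z_p$), so $h_{a,b}$ is a bijection, completing the proof. For completeness one may also exhibit the inverse explicitly as $h_{a,b}^{-1}(y) = a^{-1}(y - b) \bmod p$, which is useful later for \empgcb{} to recover the active node traded at a given position.

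I do not anticipate a genuine obstacle here: the only point requiring care is invoking primality of $p$ precisely where it is needed, namely to guarantee that $a$ (being a nonzero residue) has a multiplicative inverse modulo $p$ — equivalently, that $\Z_p$ has no zero divisors, so $a(x_1-x_2) \equiv 0$ forces $x_1 \equiv x_2$. Without primality this step fails (e.g.\ $a=2$, $p=4$), so it is the load-bearing hypothesis and should be stated explicitly.
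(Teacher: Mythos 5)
Your proof is correct and follows essentially the same route as the paper: both establish injectivity (the paper by factoring $h_{a,b}$ into a translation composed with a multiplication, you by direct cancellation using that $a$ is invertible modulo the prime $p$) and both exhibit the explicit inverse $h^{-1}_{a,b}(y) = a^{-1}(y-b) \bmod p$. Your version is slightly more explicit about where primality is load-bearing, but the substance is identical.
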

\begin{proof}
	The translation $\tau_b(x) = x + b \mod p$ and multiplication $\chi_a(x) = ax \mod p$ is injective for all $a \in \mathbb Z_p^*$ and $b \in \mathbb Z_p$.
	Then, the composition $h_{a,b} = (\chi_a \circ \tau_b)$ is also injective and the inverse is given by $h^{-1}_{a,b}(y) = a^{-1}(y-b) \mod p$.
\end{proof}
\begin{mylem}
	\label{lem:statrandom}
	The ensemble $H = \{ h_{a,b}\colon a \in \mathbb Z_p^*,~b \in \mathbb Z_p \}$  is $2$-universal.
\end{mylem}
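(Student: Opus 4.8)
The plan is to verify the defining identity of $2$-universality directly for the ensemble $H = \{ h_{a,b}\colon a\in\Z_p^*,\ b\in\Z_p \}$. Fix two distinct inputs $x_1, x_2 \in \Z_p$ and two (not necessarily distinct) targets $y_1, y_2 \in \Z_p$. Since $|H| = (p-1)p$, it suffices to count the number of pairs $(a,b)$ with $a\in\Z_p^*$, $b\in\Z_p$ satisfying the linear system
\begin{align*}
  a x_1 + b &\equiv y_1 \pmod p,\\
  a x_2 + b &\equiv y_2 \pmod p,
\end{align*}
and show this count equals $(p-1)p \cdot p^{-2} = (p-1)/p$. Of course that fraction is not an integer, which already signals that the claimed statement as literally written cannot hold for all $y_1,y_2$; the honest version of the lemma restricts to $y_1\ne y_2$, and I would state it that way (the paper's footnote ``given one node in a single trade, the other is uniformly chosen among the remaining'' confirms this is the intended reading). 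So I would first observe that $h_{a,b}$ is a bijection by Proposition~\ref{prop:congruentialbijective}, hence $h_{a,b}(x_1)=h_{a,b}(x_2)$ is impossible for $x_1\ne x_2$; thus $\P(h(x_1)=y_1\wedge h(x_2)=y_2)=0$ whenever $y_1=y_2$, and the interesting case is $y_1\ne y_2$.

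For $y_1 \ne y_2$, subtract the two congruences to get $a(x_1-x_2)\equiv y_1-y_2\pmod p$. Because $p$ is prime and $x_1\ne x_2$, the element $x_1-x_2$ is invertible in $\Z_p$, so this determines $a \equiv (y_1-y_2)(x_1-x_2)^{-1} \pmod p$ uniquely; moreover $a\ne 0$ since $y_1\ne y_2$, so this $a$ is a legitimate element of $\Z_p^*$. Substituting back into the first congruence determines $b\equiv y_1 - a x_1 \pmod p$ uniquely. Hence there is exactly one valid pair $(a,b)$, so $\P = 1/|H| = 1/((p-1)p)$. This is not $p^{-2}$, so if one insists on the stated formula, the clean route is to work over $Y = \Z_p^*$-sized effective range or simply to present the pairwise-independence statement ``for distinct $x_1,x_2$ and distinct $y_1,y_2$, $\P(h(x_1)=y_1\wedge h(x_2)=y_2) = 1/((p-1)p)$,'' from which it follows that for any fixed $x_1$ the value $h(x_2)$ is uniform over $\Z_p\setminus\{h(x_1)\}$ — precisely the property \empgcb{} needs.

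The key steps in order: (1) recall bijectivity to dispose of the $y_1=y_2$ case; (2) reduce the two-equation system to a single equation for $a$ by subtraction; (3) use primality of $p$ to invert $x_1-x_2$ and solve for $a$ uniquely, checking $a\in\Z_p^*$; (4) back-substitute for $b$; (5) divide the count ($=1$) by $|H|=(p-1)p$. The main obstacle is not technical but a matter of correct bookkeeping of the image/domain sizes: the raw count does not match $|Y|^{-2}=p^{-2}$ with $|Y|=p$, so I would either restrict the definition to the ``distinct targets'' regime or phrase the conclusion as the conditional-uniformity statement actually used downstream, and note explicitly that the $\O(\log n)$-gap remark (Appendix~\ref{sec:lincongmap}) handles the minor mismatch between $n$ and the prime $p\ge n$ separately. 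Everything else is a one-line linear-algebra argument over $\mathbb{F}_p$.
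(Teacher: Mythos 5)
Your proof is correct, but it takes a genuinely different route from the paper: the paper's entire proof of Lemma~\ref{lem:statrandom} is the citation ``see Proposition~7 of \cite{Carter1979}'', whereas you give a self-contained counting argument over $\mathbb{F}_p$. The comparison is instructive. The citation is shorter, but your direct computation surfaces something the paper glosses over: since every $h_{a,b}$ with $a \in \Z_p^*$ is a bijection (Proposition~\ref{prop:congruentialbijective}), the ensemble cannot satisfy the paper's own Definition of $2$-universality verbatim --- the joint probability is $0$ when $y_1 = y_2$ and $1/\bigl(p(p-1)\bigr)$ when $y_1 \ne y_2$, neither of which equals $|Y|^{-2} = p^{-2}$. (Carter and Wegman's Proposition~7 concerns the family with $a$ ranging over all of $\Z_p$, or a composition with a range-reduction map, where the exact $|Y|^{-2}$ bound does hold; restricting to $a \ne 0$ to obtain invertibility trades that away.) Your unique-solvability argument --- subtract the two congruences, invert $x_1 - x_2$ modulo the prime $p$ to pin down $a$, check $a \ne 0$, back-substitute for $b$, divide by $|H| = p(p-1)$ --- is the standard one-line linear algebra over $\mathbb{F}_p$ and is exactly right. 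Your reformulation as ``for fixed $x_1$, the value $h(x_2)$ is uniform on $\Z_p \setminus \{h(x_1)\}$'' is also precisely the property the paper's footnote in Section~\ref{subsec:algo-empgcb} attributes to the family and that \empgcb{} actually relies on, so the lemma should arguably be stated in that conditional form. No gap in your argument; if anything, it is a small correction to the paper's statement.
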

\begin{proof}
  see Proposition 7 of \cite{Carter1979}.
\end{proof}
The input size will most likely not be prime but linear congruential maps can still be used as injective maps since by the prime number theorem the next larger prime to a number $n$ is on average $\O(\ln(n))$ larger.
Additionally, since $[n]$ is a subset of $\mathbb Z_p$ the $2$-universality also already applies to distinct keys $x_1,x_2 \in [n]$.
The small difference in $n$ and $p$ brings an additional feature we exploit while sending type (ii) messages (see Proposition~\ref{prop:loglogsearch}):
given a lower and upper bound on a hashed value with their respective ranks, one can estimate the rank of an element lying between those bounds.
\begin{mydef}[Sorted rank-map]
	Let $n \in \mathbb N$.
	Further, let $h \colon [n] \to \mathbb N$ be an injective map restricted to $[n]$ and $\pi_h$ be the permutation that sorts $\sequence{h(i)}{i=1}{n}$ ascendingly.
	Denote with $\pi = (h \circ \pi_h) \colon [n] \to \mathbf{im}(h)$ the \emph{sorted rank-map}.
	It is clear that $\pi$ is bijective, and $\pi^{-1}$ remaps a mapped value to its rank in $\mathbf{im}(h)$, see Fig.~\ref{fig:sortedrankmap}.
\end{mydef}
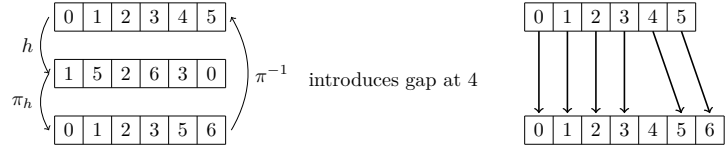
\begin{figure}[t]
	\centering\scalebox{0.75}{
\begin{tikzpicture}[scale=0.5]
	\path (-0.2,-0.5) edge[bend right, ->] node[left] {$h$} (-0.2,-2.5);
	\path (-0.2,-2.5) edge[bend right, ->] node[left] {$\pi_h$} (-0.2,-4.5);
	\path (6.2, -4.5) edge[bend right, ->] node[right] {$\pi^{-1}$} (6.2, -0.5);
	
	\draw (0,0) -- (6,0);
	\draw (0,-1) -- (6,-1);
	\draw (0,0) -- (0,-1);
	\draw (1,0) -- (1,-1);
	\draw (2,0) -- (2,-1);
	\draw (3,0) -- (3,-1);
	\draw (4,0) -- (4,-1);
	\draw (5,0) -- (5,-1);
	\draw (6,0) -- (6,-1);
	\node at (0.5,-0.5) {0};
	\node at (1.5,-0.5) {1};
	\node at (2.5,-0.5) {2};
	\node at (3.5,-0.5) {3};
	\node at (4.5,-0.5) {4};
	\node at (5.5,-0.5) {5};
	
	\draw (0,-2) -- (6,-2);
	\draw (0,-3) -- (6,-3);
	\draw (0,-2) -- (0,-3);
	\draw (1,-2) -- (1,-3);
	\draw (2,-2) -- (2,-3);
	\draw (3,-2) -- (3,-3);
	\draw (4,-2) -- (4,-3);
	\draw (5,-2) -- (5,-3);
	\draw (6,-2) -- (6,-3);
	\node at (0.5,-2.5) {1};
	\node at (1.5,-2.5) {5};
	\node at (2.5,-2.5) {2};
	\node at (3.5,-2.5) {6};
	\node at (4.5,-2.5) {3};
	\node at (5.5,-2.5) {0};
	
	\draw (0,-4) -- (6,-4);
	\draw (0,-5) -- (6,-5);
	\draw (0,-4) -- (0,-5);
	\draw (1,-4) -- (1,-5);
	\draw (2,-4) -- (2,-5);
	\draw (3,-4) -- (3,-5);
	\draw (4,-4) -- (4,-5);
	\draw (5,-4) -- (5,-5);
	\draw (6,-4) -- (6,-5);
	\node at (0.5,-4.5) {0};
	\node at (1.5,-4.5) {1};
	\node at (2.5,-4.5) {2};
	\node at (3.5,-4.5) {3};
	\node at (4.5,-4.5) {5};
	\node at (5.5,-4.5) {6};
\end{tikzpicture} 		\begin{tikzpicture}
		\node at (0,0) {introduces gap at 4\textcolor{white}{xdd}};
		\node at (0, -1) {};
		\end{tikzpicture}
\begin{tikzpicture}[scale = 0.5]
	\draw (0,0) -- (6,0);
	\draw (0,-1) -- (6,-1);
	\draw (0,0) -- (0,-1);
	\draw (1,0) -- (1,-1);
	\draw (2,0) -- (2,-1);
	\draw (3,0) -- (3,-1);
	\draw (4,0) -- (4,-1);
	\draw (5,0) -- (5,-1);
	\draw (6,0) -- (6,-1);
	\node at (0.5,-0.5) {0};
	\node at (1.5,-0.5) {1};
	\node at (2.5,-0.5) {2};
	\node at (3.5,-0.5) {3};
	\node at (4.5,-0.5) {4};
	\node at (5.5,-0.5) {5};
	
	\draw (0,-4) -- (7,-4);
	\draw (0,-5) -- (7,-5);
	\draw (0,-4) -- (0,-5);
	\draw (1,-4) -- (1,-5);
	\draw (2,-4) -- (2,-5);
	\draw (3,-4) -- (3,-5);
	\draw (4,-4) -- (4,-5);
	\draw (5,-4) -- (5,-5);
	\draw (6,-4) -- (6,-5);
	\draw (7,-4) -- (7,-5);
	\node at (0.5,-4.5) {0};
	\node at (1.5,-4.5) {1};
	\node at (2.5,-4.5) {2};
	\node at (3.5,-4.5) {3};
	\node at (4.5,-4.5) {4};
	\node at (5.5,-4.5) {5};
	\node at (6.5,-4.5) {6};
	
	\draw[->, thick] (0.5, -1) -- (0.5, -3.9);
	\draw[->, thick] (1.5, -1) -- (1.5, -3.9);
	\draw[->, thick] (2.5, -1) -- (2.5, -3.9);
	\draw[->, thick] (3.5, -1) -- (3.5, -3.9);
	\draw[->, thick] (4.5, -1) -- (5.5, -3.9);
	\draw[->, thick] (5.5, -1) -- (6.5, -3.9);
\end{tikzpicture} %
}
	\caption{The sorted rank-map for $n = 6$ and $h\colon [n] \to \mathbb Z_7, x \mapsto 4x + 1$.
		For the set $\{0, 1, 2, 3\}$ the sorted rank-map $\pi$ is just the identity.
		In contrast for $x \in \{4, 5\}$ the value $x$ is mapped to $\pi(x) = x + 1$.}
	\label{fig:sortedrankmap}
\end{figure}
\begin{myrem}
	The sorted rank-map $\pi$ can only shift the original values and is thus monotonically increasing, see Fig.~\ref{fig:sortedrankmap}.
	The shift in value is given by $\pi(x) - x$ and is monotonically increasing, too.
	By applying $\pi$ we introduce gaps in the set $\mathbb Z_p$ from $[n]$, refer to Fig.~\ref{fig:sortedrankmap}.
\end{myrem}
\begin{myprop}
	\label{prop:bounds}
	Let $n \in \mathbb N$ and $p \ge n$ be a prime number.
	Further, let $h\colon [n] \to \mathbb Z_p$ be a linear congruential map and $\pi$ be its sorted rank-map.
	If we want to compute the rank of $y \in \mathbf{im}(h)$ and know $x, x' \in [n]$ where $h(x) \le y \le h(x')$ then we can bound the rank $\pi^{-1}(y)$ of $y$ by using the shifts of $x$ and $x'$: 
	$y - (\pi(x') - x') \le \pi^{-1}(y) \le y - (\pi(x) - x).$
\end{myprop}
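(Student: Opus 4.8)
The plan is to reduce the two–sided estimate to the monotonicity of the \emph{shift} $\delta(t)\defrel\pi(t)-t$ recorded in the Remark preceding the statement; in particular the explicit form of the linear congruential map will be irrelevant, and only the injectivity of $h$ on $[n]$ (hence $|\im(h)|=n$) will be used.

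First I would put $r\defrel\pi^{-1}(y)$, which is well defined because $y\in\im(h)$ and $\pi\colon[n]\to\im(h)$ is a bijection, and record $\pi(r)=y$. Subtracting $r$ from all three parts of the claimed chain $y-(\pi(x')-x')\le r\le y-(\pi(x)-x)$ and using $y=\pi(r)$, this chain is seen to be equivalent to $\delta(x)\le\delta(r)\le\delta(x')$. Hence it suffices to establish $x\le r\le x'$ and then invoke that $\delta$ is monotonically non-decreasing.

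For $x\le r\le x'$ I would use that $\pi$, being the ascending enumeration of the $n$ pairwise distinct elements of $\im(h)$, is a strictly increasing bijection onto $\im(h)$; applying its (order–preserving) inverse to the bracketing hypothesis $\pi(x)\le y=\pi(r)\le\pi(x')$ yields $x\le r\le x'$. The monotonicity of $\delta$ needed here and in the reduction is itself immediate and is exactly the content of the Remark: consecutive sorted values are distinct integers, so $\pi(t+1)-\pi(t)\ge 1$ and therefore $\delta(t+1)-\delta(t)=(\pi(t+1)-\pi(t))-1\ge 0$.

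I do not expect a genuine obstacle: the proposition is essentially an unfolding of the definition of the sorted rank-map, and the only delicate point is bookkeeping — tracking which of the two inequalities one is currently proving and noticing that the monotone quantity is the shift $\delta$ rather than $\pi$ on its own. The same argument applies verbatim to any injective map on $[n]$; the congruential structure (cf.\ Proposition~\ref{prop:congruentialbijective} and the $\O(\log n)$ gap bound) enters only afterwards, when one wants these quantities to be cheaply computable inside \empgcb{}.
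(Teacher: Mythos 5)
Your argument is correct and is essentially the paper's own proof: the paper likewise writes $\pi(x)=x+k$, $\pi(x')=x'+k'$ with $k\le k'$, locates $\pi^{-1}(y)$ between $x$ and $x'$ by monotonicity of $\pi$, and concludes from the monotonicity of the shift that its shift $s$ lies in $\{k,\ldots,k'\}$ — exactly your $\delta(x)\le\delta(r)\le\delta(x')$. Your version merely makes explicit the step $x\le r\le x'$ and the reading of the bracketing hypothesis via $\pi$ rather than $h$, both of which the paper leaves implicit.
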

\begin{proof}
	The sorted rank-map $\pi$ is by definition monotone increasing, see also Fig.~\ref{fig:sortedrankmap}.
	It follows that $\pi(x) = x + k$, $\pi(x') = x' + k'$ and $k \le k'$ for some $k, k' \in \mathbb N$.
	By monotonicity $\pi(\pi^{-1}(y)) = \pi^{-1}(y) + s$ for $s \in \{k, \ldots, k'\}$, resulting in inequalities
	\begin{align*}
	\pi^{-1}(y) + k \le & \ y,  \\
	& \ y \le \pi^{-1}(y) + k'.
	\end{align*}
	By subtracting $k$ and $k'$ on both sides, the claim follows.
\end{proof}
With Proposition~\ref{prop:bounds} we can reduce the number of candidates to search in.
This is especially useful, when working on a smaller contiguous part of the data (see \empgcb{},  section~\ref{subsec:algo-empgcb}).
\begin{myexp}
	Let $n$ and $h$ be given from Fig.~\ref{fig:sortedrankmap}.
	It is clear that the hashed-values are given by $\mathbf{im}(h) = \{0, 1, 2, 3, 5, 6\}$.
	Suppose the rank of $2$ in $\mathbf{im}(h)$ has to be computed given the outer values e.g. that $\pi(0) = 0$ and $\pi(5) = 6$.
	Then by Propositon~\ref{prop:bounds}
	\begin{align*}
		2 - (\pi(5) - 5) &\le \pi^{-1}(2) \le 2 - (\pi(0) - 0), \\
		1 &\le \pi^{-1}(2) \le 2.
	\end{align*}
	Thus, the rank of $2$ in $\mathbf{im}(h)$ is either $1$ or $2$.
\end{myexp} %
\section{Appendix: \empgcbHead{}}
\label{sec:app-empgcb}
\empgcb{} achieves parallelism by performing multiple trades concurrently.
In contrast to \emgcb{}, rather than only retrieving the first two necessary adjacency rows for the single next trade, a whole chunk of data is loaded and maintained in \imcb{}'s adjacency list to store neighbours for a subset of nodes.
The adjacency list is further used as a way to transport messages within a loaded macrochunk.
Observe that at most $2m$ many messages are sent in a global trade round since only neighbourhood information is forwarded.

The idea is to split the messages into chunks of size $\mathcal M = cM$ where $c \in (0, 1)$ which can be processed in IM.
For this, \empgcb{} loads and proceesses all messages targetted to the next $n/k$ nodes for a constant $k$ and performs the corresponding trades concurrently.
This subdivides the messages and its processing into $k$ \emph{macrochunks}.
If a macrochunk is too large, it cannot be fully kept in IM resulting in unstructured I/O in the trading process.
The choice of $k$ should therefore additionally consider the variance.
An analysis on the size of the macrochunks is given in section~\ref{section:analysis}.

\subsection{Data structure for message transportation}

Recall in subsection~\ref{subsec:algo-empgcb} that each macrochunk is subdivided into many \emph{microchunks} and processed in batches.
During the trading process \empgcb{} has to differentiate between messages that are sent (i) within a microchunk, (ii) between microchunks of the same batch (iii) and microchunks processed later.
To support both type~(i) and type~(ii) messages we organise the messages of the current macrochunk in an adjacency vector data structure similar to \imcb{}.
Instead of forwarding these messages in a TFP-fashion, \empgcb{} inserts them directly into the adjacency data structure.
We rebuild the data structure for each macrochunk requiring the degrees of the $n/k$ loaded nodes to leave gaps if messages are missing.
In a preprocessing step we provide \empgcb{} with this information by inserting messages $\langle h_r(v), \deg(v), v \rangle$ into a separate priority queue.
Initialising the adjacency vector can now be done by loading the degrees for the next $n/k$ targets and reserving for each target $h_r(v)$ the necessary $\deg(v)$ slots.
Messages $\langle r, h_r(v), x \rangle$ targetted to the node $v$ can then be inserted in an unstructured fashion in IM.
This can be done in parallel for all targets in the macrochunk:
first the retrieved messages are sorted in parallel and then accessed concurrently after determining delimiters by a parallel prefix sum over the message counts.

For a trade $t = \{u_i, v_i\}$ of targets $h_r(u_i)$ and $h_r(v_i)$ the assigned processor can determine if the $t$ is tradable by checking whether $\deg(u_i)$ and $\deg(v_i)$ match the number of available messages.
After performing the trade, we forward the updated adjacency information.
Assume that the edge $\{u_i, x\}$ has to be send to a later trade in the same global trade.
\begin{enumerate}[(i)]
	\item 
		If $x$ is traded within the processed microchunk there is no synchronisation required and $u_i$ can be inserted into the row corresponding to target $h_r(x)$.
	\item 
		If $x$ is traded within the currently processed batch the processor has to insert $u_i$ into the row corresponding to target $h_r(x)$ with synchronisation.
		This yields a data dependency in the parallel execution.
		Inferring if the trade for $x$ belongs to the current batch can be done by comparing $h_r(x)$ to the maximum target of the batch.
	\item 
		If $x$ is traded in a later microchunk, it either belongs to the same macrochunk or a later one (of the same global trade).
		For the former \empgcb{} proceeds similar to type (ii) without processing foreign trades.
		In the latter case \empgcb{} inserts a message $\langle r,  h_r(x), u_i \rangle$ into the priority queue.
\end{enumerate}
Addressing the adjacency row of a target $h_r(u)$ can be done by computing the rank of $h_r(u)$ in the retrieved $n/k$ targets.
Since the separate priority queue provides all loaded targets by messages $\langle h_r(u), \deg(u), u \rangle$, we can perform a binary search and obtain the rank in time $\O(\log(n/k))$.

For linear congruential maps (section~\ref{sec:lincongmap}) we can do better:
\begin{myprop}\label{prop:loglogsearch}
	Let $h$ be a linear congruential map.
	Then, heuristically computing the row (rank) corresponding to $h(u)$ requires $\O(\log \log n)$ time.
\end{myprop}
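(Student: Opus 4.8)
The plan is to reduce the search to a tiny window using \autoref{prop:bounds} and then binary‑search inside it. Recall that a linear congruential map $h\colon [n]\to\Z_p$ has an image of size exactly $n$ inside $\Z_p=\{0,\dots,p-1\}$, so its sorted rank‑map $\pi$ has a shift $\pi(x)-x$ that is monotonically non‑decreasing and takes values only in $\{0,1,\dots,p-n\}$ (there are precisely $p-n$ gaps). Hence, for the value $y=h(u)$ whose row we seek, we already get the crude bracket $y-(p-n)\le \pi^{-1}(y)\le y$. When \empgcb{} is processing a macrochunk it additionally holds the sorted values $S$ of a contiguous rank range $[\ell_0,\ell_1]$; the two endpoints of $S$ yield the \emph{exact} shifts $\delta_0=\pi(\ell_0)-\ell_0$ and $\delta_1=\pi(\ell_1)-\ell_1$, and \autoref{prop:bounds} then confines the answer to $[\,y-\delta_1,\ y-\delta_0\,]\cap[\ell_0,\ell_1]$, an interval containing at most $\delta_1-\delta_0+1\le p-n+1$ candidate ranks.

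From here I would proceed operationally: the window bounds are computed in $\O(1)$ time from the first and last entry of the loaded macrochunk (these, together with all loaded targets, are streamed in sorted order by the auxiliary priority queue carrying $\langle h_r(v),\deg(v),v\rangle$). Then an ordinary binary search for $y$ in the corresponding sub‑array of $S$ locates the entry in $\O(\log(p-n))$ comparisons, and the found position within $S$ is exactly the requested row $\pi^{-1}(y)$ (shifted by the macrochunk offset if the global rank is wanted). Iterating \autoref{prop:bounds} on an interior midpoint would not help, since evaluating the shift there already costs one memory probe, i.e. one binary‑search step; so plain binary search on the window is optimal for this approach.

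It remains to bound $p-n$, and this is the only non‑rigorous ingredient — hence the statement is phrased as a heuristic. Here I would invoke exactly the assumption already used in \autoref{sec:lincongmap}: $p$ is the smallest prime $\ge n$, and by the prime number theorem the gap to the next prime is $\O(\log n)$ on average (and, e.g. under Cram\'er's conjecture, always), so we treat $p-n=\O(\log n)$ as given. Substituting gives $\O(\log(p-n))=\O(\log\log n)$, as claimed. The hard part is precisely this: unconditional bounds on individual prime gaps are far weaker than $\O(\log n)$, which would inflate the window and the running time; on the practical side, for the values of $n$ of interest the gap is a small constant, so the bound is unproblematic. A minor detail to treat cleanly is that a value $y$ whose rank lies outside $[\ell_0,\ell_1]$ simply does not belong to the current macrochunk, so intersecting the bracket with $[\ell_0,\ell_1]$ is harmless and keeps the search within loaded memory.
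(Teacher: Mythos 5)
Your proposal is correct and follows essentially the same route as the paper's own proof: use the known shifts at the macrochunk's endpoints together with Proposition~\ref{prop:bounds} to shrink the candidate window to $\O(p-n)=\O(\log n)$ ranks (invoking the prime-gap heuristic), then binary-search that window in $\O(\log\log n)$ time. Your write-up is in fact somewhat more explicit than the paper's about why the window size is bounded by the number of gaps $p-n$ and about the heuristic nature of the prime-gap assumption.
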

\begin{proof}
	The next larger prime $p$ to $n$ is heuristically $\ln (n)$ larger than $n$.
	After loading all messages $\langle h(u), \deg(u), u \rangle$ for the current macrochunk the smallest and largest hashed value of the current macrochunk are known.
	By subtracting both values by the already processed number of targets and using Proposition~\ref{prop:bounds} the search space can be reduced to $\O(\log n)$ elements.
  Application of a binary search on the remaining elements yields the claim.
\end{proof}
As already mentioned, if a trade has not received all its required messages, the assigned processor cannot perform the trade yet and therefore skips it.
This can only happen within a batch when type (ii) messages occur.
In section~\ref{section:analysis} we argue that this happens rarely.
The processor that inserts the last message for that particular trade will perform it instead.

\subsection{Improvements for type (iii) messages}
Messages inserted into the priority queue need to contain the round-id to process global trades separately.
Observe however that in a sequence of global trades, messages are only send to the current and subsequent round.
We therefore modify our data structure, omitting the round from \emph{every} message reducing the memory footprint significantly.
Recall that, as an optimisation for $m = \O(M^2/B)$ edges, \empgcb{} uses external memory queues for each of the $k$ macrochunks of both global trade rounds.

A previously generated message $\langle r, h_r(u), x \rangle$ is now inserted into the corresponding queue containing messages for $h_r(u)$.
Again, in a preprocessing step \empgcb{} determines for each queue its target range.
For this, the separate priority queue containing messages $\langle h_r(u), \deg(u), u \rangle$ is read while extracting every $(n/k)$-th target (retrieving every element results in a sequence of sorted messages).
This enables the computation of the correct queue for $h_r(u)$ with a binary search in time $\O(\log(k))$.
Naturally since both the current and subsequent round are relevant, \empgcb{} employs $k$ external memory queues for each.
If a global trade is finished, the $k$ EM queues of the currently processed and finished round can be reused for the next global trade.
\empgcb{}'s pseudo code can be found in Algorithm~\ref{algo:curveball-em-par}.
\begin{myalgorithm}[t]
	\SetKwFor{Pardo}{pardo}{}
	
	\SetKwData{auxInfoToTarget}{AuxInfoToTarget}
	
	\KwData{Trade sequence $T = \sequence{h_i}{i=1}{r}$, simple graph $G=(V, E)$ as edge list $E$} 
	\KwResult{Randomised graph $G'$}
	\tcp{Initialisation: provide auxiliary info and initialise with edges}
	
	\ForEach{node $u \in V$}{
		Send $\langle h_1(u), \deg(u), u \rangle$ via \auxInfoToTarget 
		\tcp{Send node and degree to target}
	}
	Sort \auxInfoToTarget lexicographically \\
	Scan \auxInfoToTarget and determine bounds for the $k$ queues
	
	\ForEach{edge $e = [u,v]$ in $E$}{
		Insert $e$ according to $h_1$ into one of the corresponding queues
	}

	\vspace{0.5em}
	\tcp{Execution: Process rounds and macrochunks}

	\For{round $R = 1, \ldots, r$}{
		\For{macrochunk $K = 1, \ldots, k$}{
			Retrieve auxiliary data $\langle h_R(u), \deg(u), u \rangle$ from \auxInfoToTarget
			
			Load and sort messages of the $K$-th queue
			
			Insert the messages into the adjacency list $\AG$ in parallel
			
			\For{batch $\mathcal B = 1, \ldots, z$}{
				\Pardo {the $i$-th processor works on the $i$-th microchunk of batch $\mathcal B$} {
					\For{a trade $t = \{u, v\}$}{
						Retrieve $A_u$ and $A_v$ from $\AG$
						
						With $\deg(u)$ and $\deg(v)$ determine whether tradable
						
						\uIf{tradable}{
							Compute $A_u'$ and $A'_v$
							
							Forward each resulting edge \newline
							 worksteal if inserted message fills all necessary data
						}
						\lElse {
							Skip
						}
					}
				}
			}
		}
		\If{$R < r$}{
			Clear \auxInfoToTarget and refill for $h_{R+1}$ (repeat steps 3 to 5)
		}
	}
	\caption{\empgcb{} as detailled in section~\ref{subsec:algo-empgcb} and section~\ref{sec:app-empgcb}.}
	\label{algo:curveball-em-par}
\end{myalgorithm}

\section{Analysis of \empgcb}
\label{section:analysis}

\subsection{Macrochunk size}
As already mentioned, the number of incoming messages may exceed the size of the internal memory $M$, since we partition the nodes into chunks which then may receive a different number of messages.
Therefore some analysis on the size of the maximum macrochunk is necessary.
Denote with $\mathcal N(\mu, \sigma^2)$ the distribution of a Gaussian r.v.\ with mean $\mu$ and variance $\sigma^2$.
A macrochunk holds the sum of $n/k$ many iid degrees and is thus approximately Gaussian with mean $2m/k$ and variance $n/k \cdot \mathbf{Var}(D)$ where $D$ is distributed to the underlying degree distribution.
This approximation gets better for larger values of $n/k$ and is thus a suitable approximation for large graphs.
Denote with $S_1, \ldots, S_k$ the sizes of all $k$ macrochunks.

When determining a suitable choice of $k$, it is necessary to consider both the mean and the variance of the maximum macrochunk $\max_{1\le i \le k}S_i$.
The largest macrochunk may receive many high-degree nodes exceeding the size of the internal memory $M$.
We thus bound its number in Corollary~\ref{cor:expectation} and Corollary~\ref{cor:variance}.

\begin{mylem}
	\label{lem:variancemaxnormal}
	Let $Y = \max_{1 \le i \le k} X_i$, where the $X_i$ are iid r.v.\ distributed as $\mathcal N(0, \sigma^2)$.
	Then,
	$ \mathbf E[Y] \le \sigma \sqrt{2\log(k)}. $
\end{mylem}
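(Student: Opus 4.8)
The plan is to use the standard exponential-moment (Chernoff-style) bound for the maximum of sub-Gaussian random variables together with Jensen's inequality. Although the maximum is awkward to control directly, its exponential moment is tame: after applying $\exp(t\,\cdot)$ a maximum is dominated by a sum, and for Gaussians the summands' moment generating functions are explicit.

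First I would fix an arbitrary parameter $t>0$ and apply Jensen's inequality to the convex map $x \mapsto e^{tx}$, obtaining
\[
  e^{t\,\E[Y]} \;\le\; \E\!\left[e^{tY}\right] \;=\; \E\!\left[\max_{1\le i\le k} e^{tX_i}\right] \;\le\; \sum_{i=1}^{k}\E\!\left[e^{tX_i}\right].
\]
Then I would substitute the Gaussian moment generating function $\E[e^{tX_i}] = e^{t^2\sigma^2/2}$, which yields $e^{t\,\E[Y]} \le k\, e^{t^2\sigma^2/2}$. Taking logarithms and dividing by $t$ gives the family of bounds
\[
  \E[Y] \;\le\; \frac{\log k}{t} + \frac{t\sigma^2}{2} \qquad \text{for every } t>0.
\]

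The remaining step is to minimise the right-hand side over $t>0$; elementary calculus (or the AM--GM inequality applied to the two summands) places the optimum at $t^\star = \sqrt{2\log k}/\sigma$, where the bound becomes $\E[Y] \le \sigma\sqrt{2\log k}$, as claimed. There is essentially no real obstacle in this argument; the only point deserving a word of care is the degenerate case $k=1$, where $\log k = 0$ forces $t^\star=0$ and the optimisation breaks down — but then $Y=X_1$ and $\E[Y]=0$, so the inequality holds trivially with equality, while for $k\ge 2$ the above goes through verbatim. I would also remark that the only property of the $X_i$ actually used is the sub-Gaussian bound $\E[e^{tX_i}] \le e^{t^2\sigma^2/2}$, so the same proof applies to any zero-mean $\sigma$-sub-Gaussian summands, which is the generality in which this lemma is typically needed.
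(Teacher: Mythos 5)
Your argument is correct and is essentially identical to the paper's proof: both apply Jensen's inequality to $e^{tY}$, bound the maximum by the sum of Gaussian moment generating functions to get $\E[Y] \le \log(k)/t + t\sigma^2/2$, and optimise at $t = \sqrt{2\log(k)}/\sigma$. Your extra remarks on the $k=1$ edge case and the sub-Gaussian generalisation are sensible but not needed.
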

\begin{proof}
	The following chain of inequalities holds $e^{t\mathbf E[Y]} \le \mathbf E[e^{tY}] = \mathbf E[\max_{1 \le i \le k}e^{tX_i}] \le \sum_{i=1}^{k}\mathbf E[e^{tX_i}] = k e^{t^2\sigma^2/2}$, where in order Jensen's inequality\footnote{For a convex function $f$ and non-negative $\lambda_i$ with $\sum_{i=1}^{n}\lambda_i = 1$ follows $f(\sum_{i=1}^{n}\lambda_i x_i) \le \sum_{i=1}^{n}\lambda_if(x_i)$.} monotonicity and non-negativity of the exponential function as well as the definition of the moment generating function of a Gaussian r.v.\ have been applied.
	Taking the natural logarithm and dividing by $t$ on both sides (ruling out $t \ne 0$) yields $\mathbf E[Y] \le \frac{\log(k)}{t} + \frac{t \sigma^2}{2}$, which is minimized by $t = \sqrt{2\log(k)}/\sigma$.
	The above proof is a special case in a proof of \cite{Massart2007}.
\end{proof}
\begin{mycor}
	Let $Y = \max_{1 \le i \le k}S_i$.
	By approximating $S_i$ with a Gaussian r.v.\ $N_i$ with $\mu = \mathbf E[S_i]$ and $\sigma^2 = \mathbf{Var}(S_i)$, one gets an approximate upper bound on $Y$:
	\[ \mathbf E[Y] \approx \mathbf E\left[\max_{1 \le i \le k}N_i\right] \le \mathbf E[S_1] + \sqrt{2 \log(k) \mathbf{Var}(S_1)} = \mathbf E[S_1] + \sqrt{\frac{n\log(k)}{2k} \mathbf{Var}(D)}.\]
	\label{cor:expectation}
\end{mycor}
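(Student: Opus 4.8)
The plan is to reduce the claim to Lemma~\ref{lem:variancemaxnormal} by recentering the Gaussian approximations of the macrochunk sizes around their common mean.

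First I would make the Gaussian approximation explicit: the $i$-th macrochunk aggregates the degrees of $n/k$ nodes, and since in the analysed model these are iid copies of the degree variable $D$, the central limit theorem gives that each $S_i$ is approximately $\Normal(\mu, \sigma^2)$ with $\mu = \E[S_1] = 2m/k$ and $\sigma^2 = \Var(S_1) = (n/k)\,\Var(D)$; the quality of this approximation improves with growing $n/k$, which is exactly the regime of interest for massive graphs. Replacing each $S_i$ by such an $N_i$ and treating the $N_i$ as independent (the macrochunks are disjoint) yields $\E[\max_{i} S_i] \approx \E[\max_{i} N_i]$, the $\approx$ appearing in the statement.

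Next I would write $N_i = \E[S_1] + \widetilde N_i$ with $\widetilde N_i \defrel N_i - \E[S_1]$, so that the $\widetilde N_i$ are iid $\Normal(0, \Var(S_1))$. Since shifting by a constant commutes with taking a maximum, $\max_{i} N_i = \E[S_1] + \max_{i} \widetilde N_i$, and linearity of expectation gives $\E[\max_{i} N_i] = \E[S_1] + \E[\max_{i} \widetilde N_i]$. Applying Lemma~\ref{lem:variancemaxnormal} to $\widetilde N_1, \ldots, \widetilde N_k$ with variance $\Var(S_1)$ bounds $\E[\max_{i} \widetilde N_i] \le \sqrt{2\log(k)\,\Var(S_1)}$, which is the claimed inequality; substituting $\Var(S_1) = (n/k)\,\Var(D)$ and simplifying the radical produces the closed-form right-hand side.

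The algebra is routine; the step I would flag is the status of the approximation $\E[\max_{i} S_i] \approx \E[\max_{i} N_i]$. It is heuristic rather than a theorem: the $S_i$ are sums of only finitely many heavy-tailed (powerlaw, $\gamma \ge 1$) summands, so the CLT error in the upper tail --- precisely the region the maximum probes --- need not be negligible, and strictly speaking the $S_i$ partition a fixed degree sequence and are therefore weakly negatively correlated rather than independent. I would therefore present Corollary~\ref{cor:expectation} as an approximate a priori estimate guiding the choice of $k$, and note that a fully rigorous variant would instead bound each $S_i$ through a concentration inequality for sums of variables bounded by $b$ (the upper truncation value of the powerlaw degree distribution) together with a union bound over the $k$ macrochunks.
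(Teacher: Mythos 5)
Your proof takes essentially the same route as the paper: recentre the Gaussian approximations around their common mean $\mu = \E[S_1]$, apply Lemma~\ref{lem:variancemaxnormal} to the centred iid copies, and add the mean back by linearity (your added caveats about the CLT tail error and the weak negative correlation of the $S_i$ are sensible but do not change the argument). One small remark: substituting $\Var(S_1) = (n/k)\Var(D)$ actually yields $\sqrt{2\,n\log(k)\Var(D)/k}$ rather than the $\sqrt{n\log(k)\Var(D)/(2k)}$ displayed in the corollary, so the final equality in the statement contains a constant-factor slip that neither your write-up nor the paper's proof flags.
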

\begin{proof}
	Since $\max_{1 \le i \le k}N_i$ is centred around $\mu$, it is identically distributed to $\mu + \max_{1 \le i \le k}N_i'$ where $N_i'$ has the same variance but is centred around $0$.
	By applying Lemma~\ref{lem:variancemaxnormal} to $\max_{1 \le i \le k}N_i'$ the claim follows, since $\mathbf E\left[\max_{1 \le i \le k}N_i\right] =  \mu + \mathbf E\left[\max_{1 \le i \le k}N_i'\right]$.
\end{proof}
\begin{mylem}
	\label{lem:maximumvariance}
	Let $X_1, \ldots, X_k$ be iid and $Y = \max_{1 \le i \le k}X_i$.
	Then, 
	$ \mathbf{Var}(Y) \le k\mathbf{Var}(X_1).$
\end{mylem}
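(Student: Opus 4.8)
The plan is to reduce the bound on the variance of the maximum to a single pointwise inequality that can then be integrated term by term. First I would write $\mu \defrel \E[X_1]$ for the common mean of the iid variables (this uses only that they are identically distributed and have finite second moment), and recall the elementary fact that the variance is the minimal mean square deviation, so that $\Var(Y) \le \E[(Y-\mu)^2]$ for this particular constant $\mu$. This step removes the implicit quantity $\E[Y]$ and lets us centre everything at a deterministic value.

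Next I would exploit that subtracting a constant commutes with the maximum: $Y - \mu = \max_{1 \le i \le k}(X_i - \mu)$. Passing to absolute values gives $|Y - \mu| \le \max_{1 \le i \le k}|X_i - \mu|$, hence the pointwise bound
\[
  (Y-\mu)^2 \;\le\; \max_{1 \le i \le k}(X_i - \mu)^2 \;\le\; \sum_{i=1}^{k}(X_i - \mu)^2 .
\]
Finally I would take expectations of this inequality, apply linearity, and use that the $X_i$ are identically distributed, obtaining $\E[(Y-\mu)^2] \le \sum_{i=1}^{k}\E[(X_i-\mu)^2] = k\,\Var(X_1)$. Chaining with the first step yields $\Var(Y) \le k\,\Var(X_1)$, as claimed.

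I do not expect any genuinely hard step here; this is a routine argument. The only points that warrant a moment's care are that $\mu$ must be a deterministic constant for the identity $Y - \mu = \max_i(X_i - \mu)$ to be valid, and the use of the standard fact that $c \mapsto \E[(Y-c)^2]$ is minimised at $c = \E[Y]$. Note in particular that independence of the $X_i$ is never invoked — only identical distribution and finiteness of the second moment are needed for the statement to make sense and for the bound to hold.
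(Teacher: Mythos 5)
Your proof is correct, but it takes a genuinely different route from the paper's. The paper symmetrises: it introduces an independent copy $Y' = \max_i X_i'$, uses the identity $\E[(Z-Z')^2] = 2\Var(Z)$ for iid $Z, Z'$, proves the tail comparison $\P(|Y-Y'|^2 > r) \le \sum_{i=1}^{k}\P(|X_i-X_i'|^2 > r)$ via a maximality argument, and integrates over $r$ to conclude. You instead centre everything at the deterministic constant $\mu = \E[X_1]$, use that $\Var(Y) \le \E[(Y-\mu)^2]$, and dominate $(Y-\mu)^2$ pointwise by $\sum_i (X_i-\mu)^2$ before taking expectations. Each step of yours checks out, including the slightly delicate inequality $|\max_i(X_i-\mu)| \le \max_i|X_i-\mu|$ (which holds in both the case where the maximum of the centred variables is nonnegative and the case where all of them are negative). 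Your argument is more elementary --- no independent copy, no tail integration --- and it makes explicit that independence is never used, only equality of means and variances; the paper's proof likewise does not need independence among the $X_i$ themselves, but it does need the auxiliary independent copy and a somewhat fiddly case analysis to justify the tail bound. What the paper's symmetrised route buys is an intermediate tail-probability inequality in the same style as its Lemma on $\P(Y > t)$ for Gaussian maxima, keeping the two arguments uniform; for the variance bound alone, your derivation is the shorter and cleaner of the two.
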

\begin{proof}
	For $Z, Z'$ iid. $\mathbf E[(Z - Z')^2] = 2 \mathbf{Var}(Z)$ holds, since $\mathbf E[Z^2 - 2ZZ' + Z'^2] = 2\mathbf E[Z^2] - 2\mathbf E[Z]^2$.
	Now, let $Y' = \max_{1 \le i \le k} X_i'$ be an independent copy of $Y$ and $r > 0$. 
	
	First, the inequality $\mathbf P(|Y - Y'|^2 > r) \le \sum_{i=1}^{k}\mathbf P(|X_i - X_i'|^2 > r)$ is shown.
	We show the implication that when $|Y-Y'|^2 > r$ then there exists an index $i$ s.t.\ $|X_i - X_i'|^2 > r$.
	If $|Y-Y'|^2 > r$ holds, then w.l.o.g.\ let $Y = X_i$ and $Y' = X'_j$ and $Y > Y'$ s.t.\ $|X_i - X'_j|^2 > r$.
	By maximality the following chain of inequalities holds $X_i > X_j' \ge X_i'$.
	Which already implies $|X_i > X_i'| > r$ and consequently $\mathbf P(|Y - Y'|^2 > r) \le \mathbf P(\text{exists index $i$ s.t.\ } |X_i - X_i'| > r)$.
	
	Now by bounding the union, one gets $\mathbf P(|Y-Y'|^2 > r) \le \sum_{i=1}^{k}\mathbf P(|X_i - X_i'|^2 > r)$.
	At last, integrating $r$ from $0$ to $\infty$ yields $2\mathbf{Var}(Y) = \mathbf E[(Y-Y')^2] \le k \mathbf E[(X_1 - X_1')^2]= 2k\mathbf{Var}(X_1)$, which concludes the proof.
\end{proof}
\begin{mycor}
	\label{cor:variance}
	Let $Y = \max_{1 \le i \le k}S_i$.
	Then,
	$ \mathbf{Var}(Y) \le k\mathbf{Var}(S_1) = n\mathbf{Var}(D)$.
\end{mycor}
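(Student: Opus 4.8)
The plan is to apply Lemma~\ref{lem:maximumvariance} directly with $X_i = S_i$. To do so I first need to check its hypothesis, namely that $S_1, \ldots, S_k$ are iid: the $k$ macrochunks partition the $n$ nodes into disjoint blocks of $n/k$ nodes, and each $S_i$ is the sum of the $n/k$ degrees in its block, which are modelled as iid draws from $D$ (the same modelling assumption already underlying Corollary~\ref{cor:expectation}). Sums over disjoint blocks of an iid family are again iid, so the hypothesis is met and Lemma~\ref{lem:maximumvariance} gives $\Var(Y) \le k\,\Var(S_1)$.

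It then remains to evaluate $\Var(S_1)$. Since $S_1$ is a sum of $n/k$ independent copies of $D$, additivity of variance over independent summands yields $\Var(S_1) = (n/k)\,\Var(D)$. Substituting into the bound from the previous step gives $\Var(Y) \le k\,\Var(S_1) = k\cdot(n/k)\,\Var(D) = n\,\Var(D)$, which is exactly the claimed chain of (in)equalities.

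There is essentially no obstacle here: the corollary is a one-line specialisation of Lemma~\ref{lem:maximumvariance}, and the only substantive point is the remark that the block sums inherit the iid property from the iid degree model, after which the variance-of-a-sum computation is routine.
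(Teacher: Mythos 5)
Your proposal is correct and matches the paper's proof, which likewise obtains the corollary as a direct specialisation of Lemma~\ref{lem:maximumvariance} to the macrochunk sizes $S_i$, using the same iid-degree modelling and the identity $\Var(S_1) = (n/k)\Var(D)$ stated earlier in the section. You merely spell out the routine details that the paper leaves implicit.
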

\begin{proof}
	This is a special case of Lemma~\ref{lem:maximumvariance}.
\end{proof}

The probability mass of a Gaussian r.v.\ is concentrated around its mean, e.g.\ the tails vanish very quickly, see Proposition~\ref{prop:tailsgaussian}.
This heuristically additionally holds true for the maximum macrochunk size (Lemma~\ref{lem:concentrationinequalitynormal}).

\begin{myprop}
	\label{prop:tailsgaussian}
	Let $X$ be a standard Gaussian r.v.\ and $f(x) = \frac{1}{\sqrt{2\pi}}e^{-x^2/2}$ be its probability density function.
	Let $t > 0$ then it holds
	$ \mathbf P(X > t) \le \exp(-t^2/2) / \sqrt{2\pi} / t = \O\left(\frac{e^{-t^2/2}}{t}\right). $
\end{myprop}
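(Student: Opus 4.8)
The plan is to use the elementary trick of inserting the factor $x/t \ge 1$ into the tail integral, which converts the Gaussian tail into an integral with an explicit antiderivative. First I would write $\P(X > t) = \int_t^{\oo} f(x)\,\mathrm{d}x$ with $f(x) = \frac{1}{\sqrt{2\pi}} e^{-x^2/2}$. Because $t>0$ and the domain of integration satisfies $x \ge t$, we have $1 \le x/t$ throughout, so the integrand is dominated by $\frac{x}{t} f(x)$; this is where the hypothesis $t>0$ is used.

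Next I would note that $x e^{-x^2/2}$ is, up to sign, the derivative of $e^{-x^2/2}$, i.e. $\frac{\mathrm{d}}{\mathrm{d}x}\!\left(-e^{-x^2/2}\right) = x\,e^{-x^2/2}$. Hence
\[
  \P(X > t) \;\le\; \frac{1}{t\sqrt{2\pi}} \int_t^{\oo} x\,e^{-x^2/2}\,\mathrm{d}x \;=\; \frac{1}{t\sqrt{2\pi}}\Bigl[-e^{-x^2/2}\Bigr]_t^{\oo} \;=\; \frac{e^{-t^2/2}}{t\sqrt{2\pi}},
\]
where convergence of the improper integral is immediate from the explicit antiderivative and $e^{-x^2/2}\to 0$ as $x\to\oo$. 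Since $\sqrt{2\pi}$ is a constant, this bound is $\O(e^{-t^2/2}/t)$, which is the claim.

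There is essentially no obstacle here; the only points requiring care are the use of $t>0$ to justify $1 \le x/t$ and the (trivial) convergence of the tail integral. One could alternatively invoke a Chernoff bound, $\P(X>t) \le e^{-\lambda t}\,\E[e^{\lambda X}] = e^{\lambda^2/2-\lambda t}$, optimised at $\lambda=t$ to yield $e^{-t^2/2}$, but that route loses the $1/t$ factor present in the stated inequality, so the integral argument above is the one I would carry out.
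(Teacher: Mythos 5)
Your proof is correct and follows exactly the same route as the paper's: writing $\P(X>t)$ as the tail integral, inserting the factor $x/t \ge 1$, and evaluating $\int_t^\infty x e^{-x^2/2}\,\mathrm{d}x = e^{-t^2/2}$ explicitly. No discrepancies to report.
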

\begin{proof}
	The value of $\mathbf P(X > t)$ equals $\int_t^\infty \frac{1}{\sqrt{2\pi}}e^{-x^2/2}dx$.
	Since the integrating variable ranges from $[t, \infty)$ then $\frac{x}{t} \ge 1$ s.t.\ $\mathbf P(X > t) \le \int_{t}^{\infty}\frac{x}{t}\frac{1}{\sqrt{2\pi}}e^{-x^2/2}dx = \frac{1}{t}\frac{e^{-t^2/2}}{\sqrt{2\pi}}$.
\end{proof}

\begin{mylem}
	\label{lem:concentrationinequalitynormal}
	Let $Y = \max_{1 \le i \le k} N_i$ where $N_i$ are iid standard Gaussian random variables.
	Then $
	 \mathbf P(Y > t) = \mathcal O \left( k \exp(-t^2 / 2) / t \right). $
\end{mylem}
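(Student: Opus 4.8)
The plan is to combine a union bound over the $k$ variables with the single-variable Gaussian tail estimate already established in Proposition~\ref{prop:tailsgaussian}; no use of independence is actually needed.

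First I would observe that $\{Y > t\}$ is precisely the event that at least one $N_i$ exceeds $t$, i.e.\ $\{Y > t\} = \bigcup_{i=1}^{k} \{N_i > t\}$. A union bound then gives $\P(Y > t) \le \sum_{i=1}^{k} \P(N_i > t)$, and since the $N_i$ are identically distributed standard Gaussians this equals $k\,\P(N_1 > t)$.

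Next I would invoke Proposition~\ref{prop:tailsgaussian}, which states $\P(N_1 > t) \le \exp(-t^2/2)/(\sqrt{2\pi}\,t)$ for $t > 0$. Substituting yields $\P(Y > t) \le k\exp(-t^2/2)/(\sqrt{2\pi}\,t)$, and absorbing the constant $1/\sqrt{2\pi}$ into the $\mathcal O(\cdot)$ gives the claimed bound $\P(Y > t) = \mathcal O\!\left(k\exp(-t^2/2)/t\right)$.

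There is essentially no obstacle here; the only point requiring a word of care is that both Proposition~\ref{prop:tailsgaussian} and the stated bound are meaningful only for $t > 0$ (for $t \le 0$ the probability is trivially at most $1$ and the estimate is vacuous), which is precisely the regime of interest. This lemma is then used heuristically, via the Gaussian approximation of the macrochunk sizes $S_i$, to argue that $\max_{1\le i\le k} S_i$ stays concentrated close enough to $\E[S_1]$ that a modest number $k$ of macrochunks suffices to keep each macrochunk within internal memory with high probability.
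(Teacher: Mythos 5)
Your proposal is correct and follows exactly the paper's own argument: rewrite $\{Y>t\}$ as the union $\bigcup_i\{N_i>t\}$, apply the union bound, and invoke Proposition~\ref{prop:tailsgaussian} for the individual Gaussian tails. Your added remarks — that independence is not actually used and that the bound is only meaningful for $t>0$ — are both accurate refinements of the same proof.
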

\begin{proof}
	The claim follows by the following calculation:
	\[
	\mathbf P(Y > t) = \mathbf P\left(\max_{1 \le i \le k}N_i > t\right) = \mathbf P\left(\exists\,i \text{ s.t. } N_i > t \right)
	\le \sum_{i=1}^{k}\mathbf P(N_i > t) 
	= \O\left(k \cdot \frac{e^{-t^2/2}}{t}\right). \]

	If for any random variable $N_i {>} t$, then already $\max_{1 \le i \le k}N_i {>} t$, inversely if $\max_{1 \le i \le k} N_i {>} t$ then there exists a $N_i$ s.t.\ $N_i > t$, which shows the first equality.
	After applying the union bound and Proposition~\ref{prop:tailsgaussian} the claim follows.
\end{proof}

\subsection{Heuristic on intra-batch dependencies}
In \empgcb, if information on an edge $\{u, w\}$ has to be inserted into the same batch a dependency arises.
We will now argue that this happens not too often when the number of batches $z$ is chosen sufficiently large.

\begin{mylem}
	\label{lem:dependencies}
	Let $\mathcal B$ be the set of targets for a batch.
	Assuming uniform neighbours, the number of dependencies from $\mathcal B$ to $\mathcal B$ heuristically is
	$  \binom{p}{2}\frac{2m}{k^2z^2p^2}. $
\end{mylem}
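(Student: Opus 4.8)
The plan is to count the expected number of "intra-batch dependencies" by a straightforward linearity-of-expectation / second-moment-flavoured argument, treating each potential dependency as an indicator variable and summing their probabilities. A dependency from $\mathcal B$ to $\mathcal B$ arises whenever, during a trade of two targets inside the batch, a resulting edge $\{u,w\}$ must be forwarded to another trade whose target also lies in $\mathcal B$. So the quantity to estimate is the expected number of (produced edge, receiving target) incidences that stay within the same batch.

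\textbf{Step 1: Identify the relevant population of edges.} A batch consists of $p$ microchunks, and each microchunk contains a handful of node pairs; collectively a batch processes $\Theta(p)$ trades, i.e.\ $\Theta(p)$ active nodes. Using the uniform-neighbour heuristic, the edges produced by (or incident to) a single trade behave like uniformly random endpoints, so we can think of each produced edge $\{u,w\}$ as having its non-active endpoint $w$ uniformly distributed over the $n$ nodes (equivalently, over the $n$ target slots, up to the $\O(\log n)$ gaps which we ignore heuristically). First I would count the number of \emph{pairs} of trades within the batch that could generate such a cross-reference: there are $\binom{p}{2}$ unordered pairs of microchunk-trades in the batch, which explains the leading $\binom{p}{2}$ factor.

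\textbf{Step 2: Probability that a given pair of batch trades is linked by an edge.} Fix two trades $t$ and $t'$ inside the batch. An edge produced by $t$ points to a uniformly random target; the probability that this target is one of the $\Theta(1)$ slots associated with $t'$ is $\Theta(1/n)$ — more precisely, using the global-trade parametrisation, each batch occupies a $1/(kz)$ fraction of the $n$ positions, and within that a single microchunk/trade is a $1/p$ fraction, so a batch target range has size $\approx n/(kz)$ and the relevant slot count gives the $2m/(k^2 z^2 p^2)$ factor once we also account for the expected number $2m/(kz p)$-ish of produced edges per trade and normalise by $p^2$ appropriately. I would make this bookkeeping explicit: number of edges produced in the batch $\approx 2m/(kz)$, each independently hitting a batch-internal target with probability $\approx 1/p^2$ relative to the $\binom{p}{2}$ pairing, yielding the stated $\binom{p}{2}\cdot \tfrac{2m}{k^2 z^2 p^2}$.

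\textbf{Step 3: Assemble via linearity of expectation.} Summing the per-pair probability over all $\binom{p}{2}$ pairs gives the claimed heuristic count $\binom{p}{2}\,\tfrac{2m}{k^2 z^2 p^2}$; I would then remark that this is $\O(m/(k^2 z^2))$, which is $o(1)$ (hence dependencies are rare) as soon as $z = \omega(\sqrt{m}/k)$, matching the paper's claim that choosing $z$ large enough suppresses intra-batch stalls.

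\textbf{The main obstacle} is not the summation — that is routine — but justifying the uniform-neighbour assumption and the independence-style treatment of which target each produced edge lands on. In reality the produced edges of a trade are the shuffled disjoint neighbours, which are not literally uniform over $[n]$, are correlated with the degree sequence, and are correlated across edges of the same trade; moreover the partition into macro-/micro-chunks is by \emph{position under the hash function} $h_R$, not uniformly random, so "a batch covers a $1/(kz)$ fraction of nodes" is itself only heuristic (true up to the $\O(\log n)$ gaps of a linear congruential map, cf.\ Appendix~\ref{sec:lincongmap}). I would therefore be explicit that this is a \emph{heuristic} estimate (as the lemma already states), flag the two approximations — uniform neighbours and uniform target placement — and note that the $2$-universality of $H_p$ (Lemma~\ref{lem:statrandom}) is what makes the uniform-target approximation defensible, since it guarantees that, conditioned on one endpoint, the other trade partner is uniform among the remaining nodes.
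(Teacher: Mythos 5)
Your overall strategy --- linearity of expectation over pairs of microchunks under the uniform-neighbour heuristic --- is the same as the paper's, and your parameter bookkeeping (batch target range $n/(kz)$, microchunk target range $n/(kzp)$, roughly $2m/(kz)$ messages produced per batch) matches the paper's. The final formula does assemble correctly from these ingredients, and your closing remarks on the heuristic nature of the uniformity assumptions and the role of $2$-universality are consistent with how the paper frames the lemma.

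There are, however, two concrete problems. First, a microchunk is not a single trade: it contains $n/(kzp)$ targets, i.e.\ $n/(2kzp)$ trades (in Example~\ref{exp:dependencies} that is on the order of $10^4$ nodes per microchunk), so the claim that ``a batch processes $\Theta(p)$ trades'' is wrong; fortunately you do not actually rely on it in the computation. Second, and more importantly, you assert the factor $\binom{p}{2}$ as ``unordered pairs of microchunks'' without justifying why the count is $\binom{p}{2}$ rather than $p(p-1)$ (all ordered pairs of distinct microchunks), which would double the answer. The reason is that messages are always forwarded to the \emph{successor} of a node, i.e.\ to a strictly later position in the current round; hence a message emitted while processing microchunk $i$ can only create a type~(ii) dependency with a microchunk $j>i$ of the same batch. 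The paper makes this explicit by counting, for the $i$-th microchunk, only the $(p-i)\frac{n}{kzp}$ ``critical'' targets lying in later microchunks, obtaining $\E[H_i] = \frac{2m(p-i)}{k^2z^2p^2}$ and then
\[
\sum_{i=1}^{p} \E[H_i] \;=\; \frac{2m}{k^2z^2p^2}\sum_{i=1}^{p}(p-i) \;=\; \binom{p}{2}\frac{2m}{k^2z^2p^2}.
\]
This is the one load-bearing combinatorial step, and your write-up skips it: the phrase ``normalise by $p^2$ appropriately'' reverse-engineers the target formula rather than deriving it. Adding the time-forward observation and the resulting sum $\sum_{i=1}^{p}(p-i)=\binom{p}{2}$ would close the gap.
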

\begin{proof}
	By construction $|\mathcal B| = \frac{n}{kz}$ since $\mathcal B$ is part of an equal subdivision of a macrochunk.
	Each individual microchunk consists of $\frac{n}{kzp}$ many targets for the same reason.
	The $i$-th microchunk therefore has $(p - i)\frac{n}{kzp}$ many critical targets.
	On average each microchunk generates $\deg_{}\frac{n}{kzp} = \frac{2m}{kzp}$ many messages that need to be forwarded.
	For an edge produced by the $i$-th microchunk assume uniformity on the neighbours $A$, then $V_i$ is the number of critical messages where
	$ V_i = \sum_{i=1}^{n} 1_{i \in A} 1_{i \in h^{-1}(\mathcal B)}. $
	Its expectation is given by
	\[ \mathbf E[V_i] = \sum_{i=1}^{n}\mathbf P(i \in A)\mathbf P(i \in h^{-1}(\mathcal B)) = n\frac{\deg_{\avg}}{n}\frac{\frac{n(p-i)}{kzp}}{n} = \deg_{\avg}\frac{p-i}{kzp}. \]
	Now let the total number of messages from the $i$-th microchunk to $\mathcal B$ be $H_i$. 
	Since each microchunk holds $\frac{n}{kzp}$ many nodes, $H_i$ is given by
	\[ \mathbf E[H_i] = \frac{n}{kzp}\mathbf E[V_i] = \frac{2m(p-i)}{k^2z^2p^2}.  \]
	By summing over all $p$ microchunks, e.g.\ $\sum_{i=1}^{p}\mathbf E[H_i]$ the claim follows.
\end{proof}

\begin{myexp}
	\label{exp:dependencies}
	Consider Lemma~\ref{lem:dependencies} where $m = 12 \times 10^9$, $k = 32$, $z = 2^{11}$ and $p = 16$.
	The average number of messages in the batch is given by $m/kz \ge 1.8 \times 10^5$.
	And Lemma~\ref{lem:dependencies} predicts a count of less than $4$ critical messages on average in a batch.
\end{myexp}

\begin{figure}
	\centering
	\begin{tikzpicture}
	\def\h{0.5}
	\def\a{0}
	\fill[red!40] (0.5,\a) -- (0.5, \a + \h) -- (2, \a + \h) -- (2, \a);
	\draw (0,\a) -- (8, \a);
	\draw (0,\a + \h) -- (8, \a +\h);
	\draw (0,\a) -- (0, \a + \h);
	\draw (0.5,\a) -- (0.5, \a + \h);
	\path[->] (0.5, \a+\h) edge [bend left=30] node[above] {\tiny performs} (2.2, \a + \h);
	\path[->] (2.2, \a+\h) edge [bend left=30] node[above] {\tiny performs} (3.8, \a + \h);
	\path[->] (3.8, \a+\h) edge [bend left=30] node[above] {\tiny performs} (5, \a + \h);
	\path[->] (5, \a+\h) edge [bend left=30] node[above] {\tiny performs} (7, \a + \h);
	\draw (2,\a-0.1) -- (2, \a + \h+0.1);
	\draw (4,\a-0.1) -- (4, \a + \h+0.1);
	\draw (6,\a-0.1) -- (6, \a + \h+0.1);
	\draw (8,\a-0.1) -- (8, \a + \h+0.1);
	\node (pu1) at (1, \a - 0.5) {PU 1};
	\node (pu1) at (3, \a - 0.5) {PU 2};
	\node (pu1) at (5, \a - 0.5) {PU 3};
	\node (pu1) at (7, \a - 0.5) {PU 4};
	\end{tikzpicture}
	\caption{The arrows represent the long chain of trades that are getting work-stolen from the first PU where $p = 4$. The red marked area represents still untouched trades of the first microchunk that will get processed \emph{after} the long chain by the first PU.}
	\label{fig:workstealchain}
\end{figure}
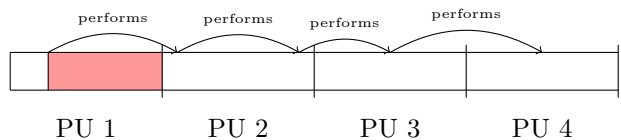

Theoretically by Lemma~\ref{lem:dependencies} the number of critical messages is very small if $z$ is set to be sufficiently large.
Therefore waiting and stalling for missing messages is inefficient and should be avoided.
\empgcb{} thus skips a trade when it cannot be performed and is later executed by the processor that adds the last missing neighbour.
However, since a work-stealing processor spends time on a trade that is possibly assigned to another microchunk, it is not working on its own.
Therefore messages coming from that particular microchunk are generated later down the line.
This may be especially bad when a PU performs a chain of trades that it was not originally assigned to as illustrated in Fig.~\ref{fig:workstealchain}.
Since work-stealing can only be done in a time-forward fashion, the chain length therefore is geometrically distributed (in fact, the probability declines in each step since less targets are critical) and is thus whp of order $\O(1)$ by Proposition~\ref{prop:geomwhp}.

\begin{myprop}
	\label{prop:geomwhp}
	Let $X$ be geometrically distributed with parameter $(1 - 1/z^2)$  for $z > 1$.
	Then,
	$\mathbf P(X > t) = \frac{1}{z^{2t}} = e^{-2\ln(z)t}.$
\end{myprop}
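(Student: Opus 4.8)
The plan is to reduce the statement to the elementary tail formula for the geometric distribution. Recall that if $X$ counts the number of independent Bernoulli trials up to and including the first success, each trial succeeding with probability $q$, then the event $\{X > t\}$ is precisely the event that the first $t$ trials all fail. By independence this has probability $(1-q)^t$; alternatively one may sum the geometric series directly, $\sum_{k > t} q(1-q)^{k-1} = q(1-q)^t\sum_{j \ge 0}(1-q)^j = q(1-q)^t \cdot \tfrac1q = (1-q)^t$.

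First I would substitute the success probability $q = 1 - 1/z^2$ occurring in the statement, so that $1-q = 1/z^2$ and hence $\P(X > t) = (1/z^2)^t = z^{-2t}$. Then I would rewrite this in exponential form, $z^{-2t} = \exp\!\big(\ln(z^{-2t})\big) = \exp(-2t\ln z)$, which is exactly the claimed closed form.

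For the purpose this proposition serves in the surrounding analysis, I would add a short remark: since the bound $e^{-2t\ln z}$ decays exponentially in $t$, picking $t = \Theta(\log n / \log z)$ already forces $\P(X > t) \le n^{-c}$ for any fixed constant $c$, so the work-stealing chain illustrated in Fig.~\ref{fig:workstealchain} has length $\O(1)$ with high probability, as asserted in the text preceding the proposition.

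The computation itself poses no real obstacle; the one point that merits a moment's care — and the closest thing to a subtlety here — is fixing the convention for the geometric distribution. The identity $\P(X > t) = (1-q)^t$ holds for the "number of trials until the first success" parametrisation with support $\{1,2,\dots\}$, whereas the "number of failures before the first success" variant (support $\{0,1,\dots\}$) would yield $(1-q)^{t+1}$ and thus an extra factor $1/z^2$. Since the statement asserts $\P(X > t) = z^{-2t}$, the former convention is the intended one, and with that choice the derivation above is immediate.
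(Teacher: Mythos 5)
Your proof is correct and follows essentially the same route as the paper, whose own proof is a one-line assertion of the geometric tail formula $\P(X>t)=1/z^{2t}$; you simply derive that formula explicitly from the ``trials until first success'' parametrisation and convert to exponential form. Your remark pinning down the support convention ($\{1,2,\dots\}$ versus $\{0,1,\dots\}$) is a worthwhile clarification the paper omits, but it does not change the substance of the argument.
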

\begin{proof}
	The claim follows by $\mathbf P(X > t) = 1/z^{2t}$ and setting $t = \O(1)$.
\end{proof}
\section{Additional experimental results}
\subsection{Swaps performed by Curveball and Global Curveball}
\label{sec:app-add-experiments} \label{sec:experiment-compare-global-trades-to-trades}

In Fig.~\ref{fig:comparison-curveball} we counted the number of neighbourhood swaps in $n/2$ uniform trades and a single global trade and obtain the fraction of performed swaps to all possible swaps.
These experiments are performed on a series of $10$-regular graphs and powerlaw graphs with increasing maximum degree.
Both algorithms perform a similar count of swaps and suggest no systematic difference.
As expected, for regular graphs the fraction of performed swaps goes to $1/2$ for an increasing number of nodes, since with increasing $n$ the number of common neighbours goes to zero.
On the other hand the fraction of performed swaps decreases for powerlaw graphs with a higher maximum degree.

\begin{figure}[t]
	\begin{center}  
		\vspace{-1em}
		\scalebox{0.42}{\includegraphics{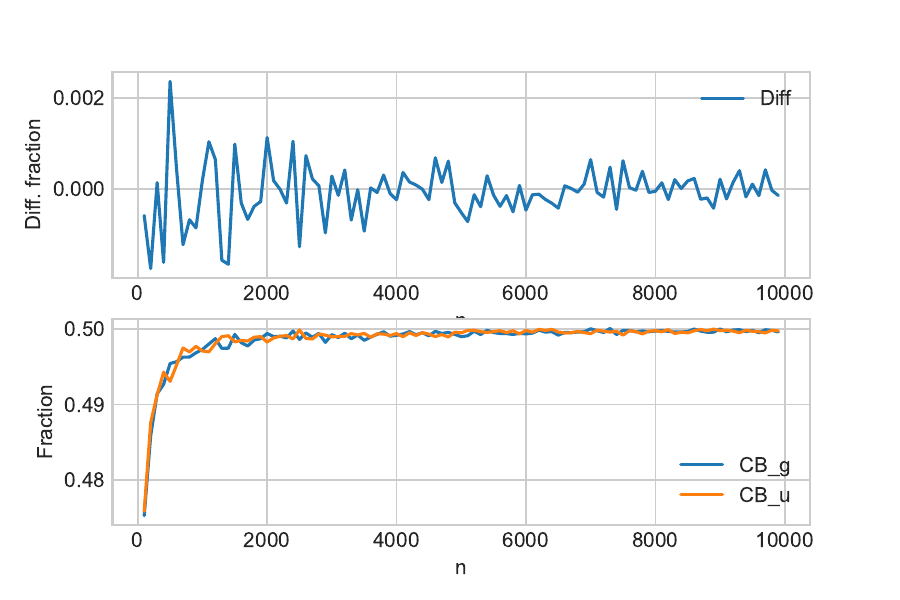}}
		\scalebox{0.42}{\includegraphics{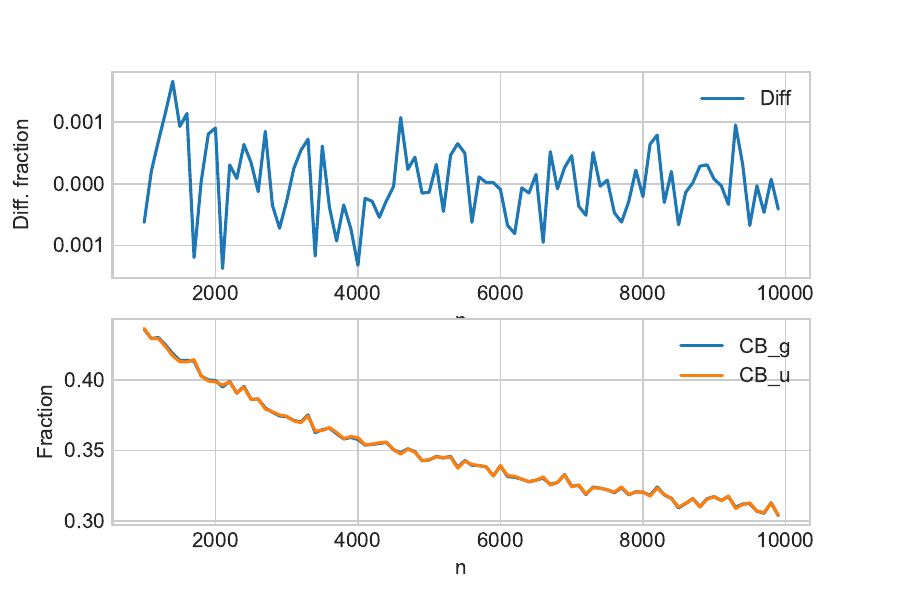}}
		\vspace{-1.5em}
	\end{center}
	\caption{
		The average fraction of performed neighbourhood swaps of $n/2$ uniform trades and a single global trade.
		\textbf{Left:} $10$-regular graphs for increasing $n$.
		\textbf{Right:} powerlaw graphs realised from $\pld{10}{n/20}{2}$ for increasing $n$ by the Havel-Hakimi algorithm.
	}
	\label{fig:comparison-curveball}
\end{figure}

\subsection{Autocorrelation time of Curveball and Edge Switching}

\label{subsec:thinkpen-results}
\begin{figure}[h]
	\centering
	\includegraphics[scale=0.43]{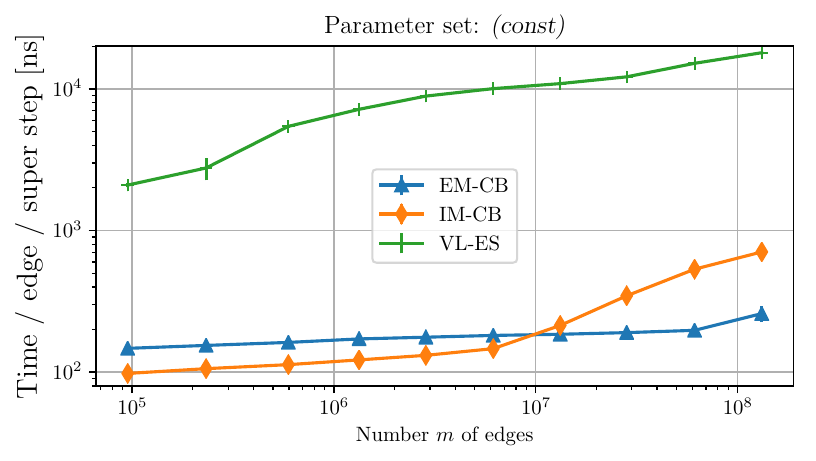}\hfill
	\includegraphics[scale=0.43]{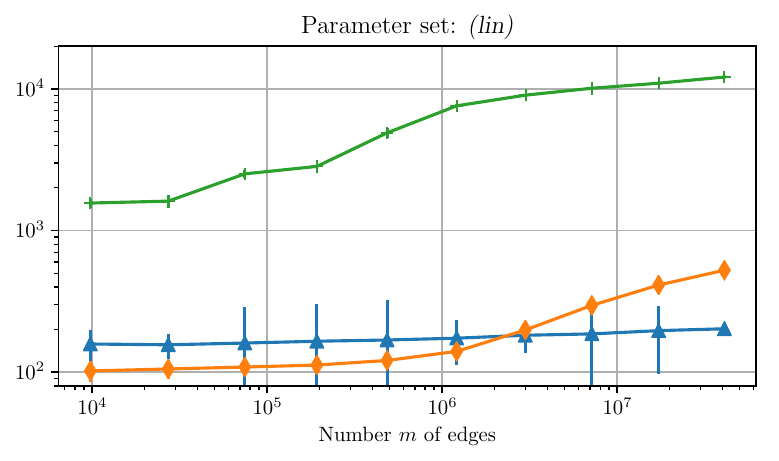}
	\vspace{-0.5em}
	\caption{
		Runtime per edge and super step of \imcb{} and \emcb{} compared to state-of-the-art IM edge switching \vles{}. 
		Each data point is the median of $S \ge 5$ runs over 10~super steps each.    
		The left plot contains the (const)-parameter set, the right one (linear).
		Machine: Intel i7-6700HQ CPU (4 cores), 64~GB RAM, Ubuntu Linux 17.10 with kernel~4.13.0-38.
	}
	\label{fig:runtimes-thinkpen}
\end{figure}

\begin{figure}[h]
	\centering
	\includegraphics[scale=0.45]{independent_edges_pld_25}
	\includegraphics[scale=0.45]{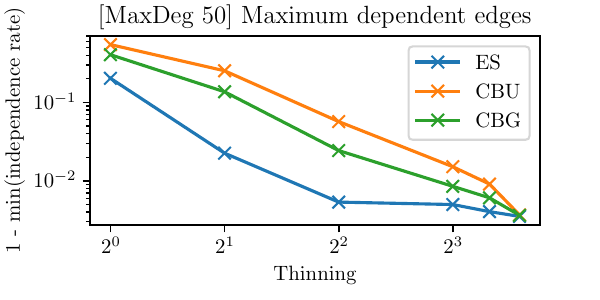}
	\includegraphics[scale=0.45]{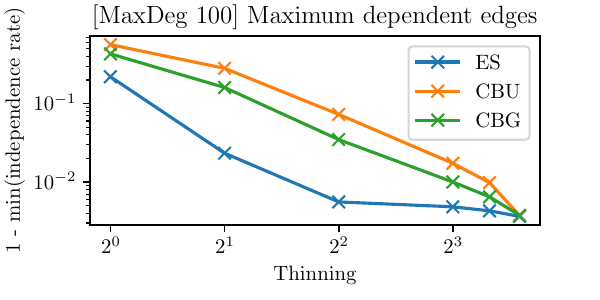}
	\includegraphics[scale=0.45]{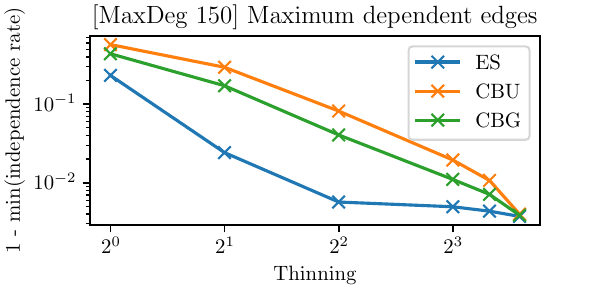}
	\includegraphics[scale=0.45]{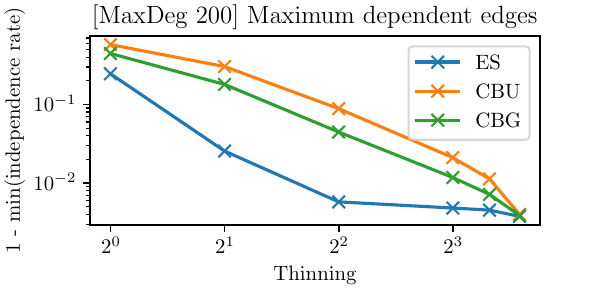}
	\includegraphics[scale=0.45]{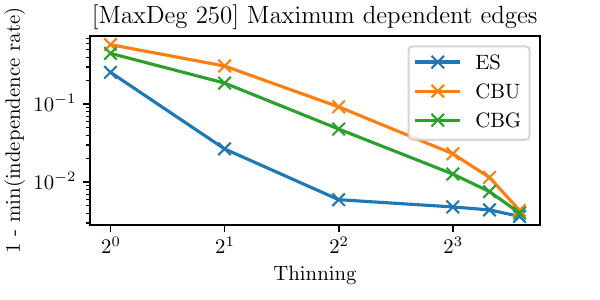}
	\includegraphics[scale=0.45]{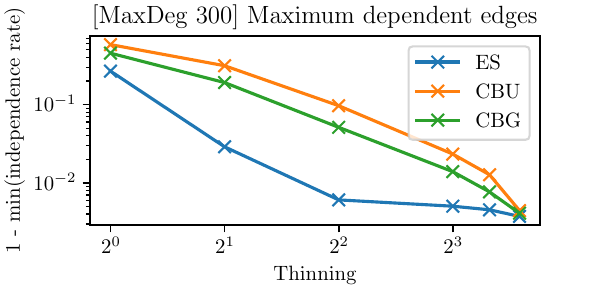}
	\includegraphics[scale=0.45]{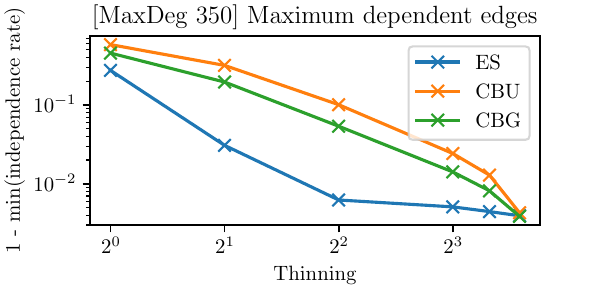}
	\includegraphics[scale=0.45]{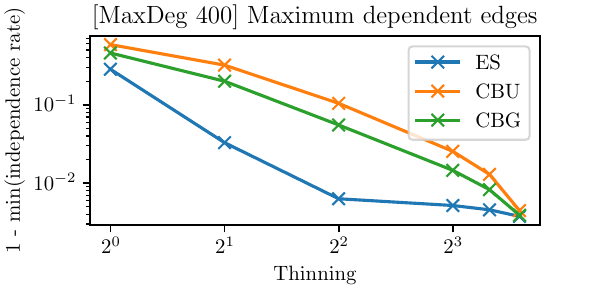}
	\includegraphics[scale=0.45]{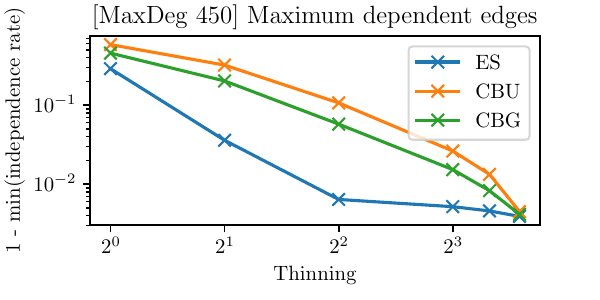}
	\includegraphics[scale=0.45]{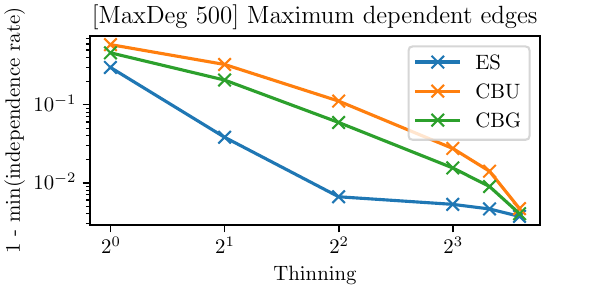}
	\includegraphics[scale=0.45]{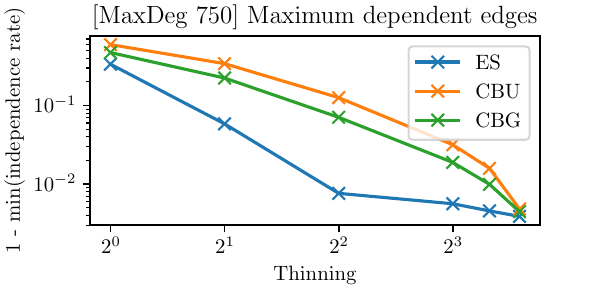}
  \vspace{-0.5em}
	\caption{
		 Fraction of edges still correlated as function of the thinning parameter $k$ for graphs with $n = 2 {\cdot} 10^3$ nodes and degree distribution $\pld{a}{b}{\gamma}$ with $\gamma = 2$, $a = 5$, and several different values for $b$.
		 The (not thinned) long Markov chains contain 6000 super steps each.
	}
	\label{fig:independencerates}
\end{figure}

\end{document}